\documentclass[prb,aps,twocolumn,superscriptaddress,floatfix]{revtex4-2}

\usepackage{amsmath}
\usepackage{amssymb}
\usepackage{bm}
\usepackage{graphicx}
\usepackage{capt-of}
\usepackage{tikz}
\usetikzlibrary{calc}

\newcommand{\rcol}[1]{{\raggedright #1\par}}
\newtheorem{theorem}{Bogoliubov sum rule.}

\newtheorem{corollary}{Corollary}

\usepackage[colorlinks=true,linkcolor=blue,citecolor=blue,urlcolor=blue,bookmarks=false]{hyperref}

\graphicspath{{figures/}}

\begin{document}
	
	\title{Bogoliubov sum rule and the Knight-shift ellipsoid in spin-locked superconductors}
	
	\author{Yi Zhou}
	\email{yizhou@iphy.ac.cn}
	\affiliation{Institute of Physics, Chinese Academy of Sciences, Beijing 100190, China}
	
	\date{\today}
	
	\begin{abstract}
				We establish an exact Bogoliubov sum rule for any Hermitian single-particle operator $O$: at each momentum, its particle-hole and particle-particle matrix-element weights sum to the single-particle trace $\mathrm{Tr}_{s}(O^{2})$. The result follows solely from Hilbert--Schmidt-norm invariance under a canonical Bogoliubov transformation and contains neither excitation-energy denominators nor occupations; physical response identities therefore require additional assumptions. At zero field, a fully gapped helicity-diagonal state with both helicity sheets present and a spin-orbit splitting asymptotically larger than the gap obeys $\chi_{\mu\nu}(0)/\chi_N=\delta_{\mu\nu}-\Pi_{\mu\nu}+o(1)$, where $\chi_N$ is the normal-state Pauli susceptibility and $\Pi=\langle\hat{\mathbf n}_{\mathbf k}\hat{\mathbf n}_{\mathbf k}\rangle_{\rm FS}$ is the Fermi-surface average of the unit spin-locking texture. The eigenvalues of $\Pi$ form a simplex, while the normalized spin Knight-shift tensor defines an ellipsoid whose semi-axes are the residual principal responses. Full cubic invariance of both the superconducting state and locking texture fixes $\Pi=\mathbb I/3$ and hence $\chi(0)/\chi_N=2\mathbb I/3+o(1)$; cubic crystal symmetry alone does not. For zero-field $s$-wave pairing in the reference-Fermi-surface regime, we obtain the exact closed-form kernel $F_s(\lambda)=1-\operatorname{asinh}\lambda/[\lambda\sqrt{1+\lambda^2}]$, valid for arbitrary $\lambda=|\mathbf g|/\Delta$. In a finite Zeeman field, the zero-field helicity reduction generally fails, so the equilibrium magnetization and differential response require a self-consistent BdG calculation rather than a field-dependent locking-tensor substitution. Applied to the $^{75}$As data on K$_2$Cr$_3$As$_3$, the framework identifies a field-dependent axial suppression pattern at $8$--$16$~T. Both its placement near the $\hat c$-axis vertex and the resulting tension with a specified decoupled-pocket spin-orbit-texture baseline require an additional fixed-tensor $T\to0$ and zero-field extrapolation; no unique microscopic pairing state follows.
	\end{abstract}
	
	\maketitle
	
	\tableofcontents
	
	\section{Introduction}
	\label{sec:intro}
	
	The Knight shift is among the oldest and most direct probes of spin susceptibility in superconductors. Early measurements in mercury by Reif~\cite{Reif1957} and in tin by Androes and Knight~\cite{AndroesKnight1959} found a finite residual Pauli response at $T=0$, in contrast to the clean, spin-rotation-invariant BCS result~\cite{BCS1957,Yosida1958}. Ferrell~\cite{Ferrell1959}, Anderson~\cite{AndersonKnight1959}, Martin and Kadanoff~\cite{Kadanoff1959}, and Schrieffer~\cite{Schrieffer1959} showed how spin-orbit scattering can preserve a residual response; Appel~\cite{Appel1965}, Shiba~\cite{shiba1976effect}, Zhogolev and Glasser~\cite{zhogolev1972magnetic}, and Anderson's dirty-limit analysis~\cite{Andersontheoryofdirty1959} subsequently refined this picture. These developments established the residual Knight shift as a quantitative probe of the competition between pairing and spin-rotation breaking. Standard treatments of NMR in metals and superconductors are given in Refs.~\cite{Slichter1990,tinkham}.
	
	In noncentrosymmetric superconductors, this interplay can acquire a geometric character. Broken inversion symmetry permits antisymmetric spin-orbit coupling (SOC)~\cite{Rashba1960,Dresselhaus}, which generally splits the Fermi surface into helicity sheets and locks spin to momentum. In the zero-field strong-locking regime, the residual Knight-shift anisotropy can then encode the Fermi-surface average of the locking texture rather than being determined by the spin classification of the pairing alone. Zero-field susceptibility has been studied in specific models~\cite{Frigeri2004,Frigeri_2004,Samokhin2007,Edelstein2008} and surveyed in broader reviews~\cite{bauer2012non,Yip14,Samokhin15,Smidman17}. Reference~\cite{pang2026} developed a self-consistent single-band Bogoliubov--de Gennes (BdG) theory of $\chi_{\mu\nu}(T)$ applicable at arbitrary SOC strength and Zeeman field. The companion paper~\cite{pang2025} treats the transition temperature and quasiparticle excitations within the same framework; the present work develops its exact kinematic identities and controlled static and dynamical response consequences.
	
		The analysis distinguishes three logical levels. First, the Bogoliubov sum rule is an exact kinematic identity within a quadratic BdG description. Second, static and dynamical response identities require explicit assumptions about the spectrum, pairing structure, and limiting procedure. Third, extracting conclusions from Knight-shift or relaxation data also requires control of hyperfine coupling, orbital subtraction, and band structure. Maintaining these distinctions prevents an operator-norm budget from being misinterpreted as a universal susceptibility bound.
		
			The central response result belongs to the second level. Consider the zero-field case and assume a fully gapped, helicity-diagonal state with both spin-split helicity sheets crossing the Fermi level. If $\Delta_{\max}\ll g_{\min}$ and $g_{\max}\ll E_F$, where $\Delta_{\max}$ is the maximum gap and $g_{\min}$ and $g_{\max}$ are the minimum and maximum SOC energy scales on the paired Fermi surfaces, then
		\begin{equation}
			\frac{\chi_{\mu\nu}(0)}{\chi_N}
			=\delta_{\mu\nu}-\Pi_{\mu\nu}+o(1),\qquad
			\Pi_{\mu\nu}\equiv
			\left\langle\hat n_{\mathbf k,\mu}\hat n_{\mathbf k,\nu}\right\rangle_{\rm FS}.
			\label{eq:intro-main}
		\end{equation}
		Here $\chi_N$ is the normal-state Pauli susceptibility, $\hat{\mathbf n}_{\mathbf k}$ is the unit spin-locking texture, and $o(1)$ denotes corrections vanishing as $\Delta_{\max}/g_{\min}\to0$ and $g_{\max}/E_F\to0$. The tensor $\Pi$ is symmetric, positive semidefinite, and has unit trace, but is not generally itself a projector. The allowed eigenvalue triples therefore lie on a two-dimensional simplex. The associated \emph{Knight-shift ellipsoid} is defined by the normalized spin response: its semi-axes are the residual principal Knight shifts, and a vanishing response collapses the corresponding axis. This geometry classifies locking textures rather than pairing parity: distinct singlet, triplet, and parity-mixed states can occupy the same simplex point. Here ``geometric'' refers to the Fermi-surface texture of $\hat{\mathbf n}_{\mathbf k}$, not to the quantum-geometric tensor of the Bloch states.
	
		The kinematic starting point for deriving Eq.~\eqref{eq:intro-main} is the exact Bogoliubov sum rule
	\begin{equation}
		\sum_{s_{1}s_{2}}\!\left[
		W^{s_{1}s_{2}}_{ph,O}(\mathbf{k})
		+W^{s_{1}s_{2}}_{pp,O}(\mathbf{k})\right]
		=\mathrm{Tr}_{s}(O^{2}),
		\label{eq:intro-sumrule}
	\end{equation}
			where $W_{ph,O}$ and $W_{pp,O}$ are the sum-rule weights constructed from particle-hole and particle-particle squared matrix elements, as defined in Sec.~\ref{sec:generalized}. This identity holds pointwise at every $\mathbf{k}$ for any Hermitian single-particle operator $O$. It is the BdG-doubled form of Hilbert--Schmidt-norm invariance, $\|U^{\dagger}\mathcal{O}_{\rm BdG}U\|_{F}=\|\mathcal{O}_{\rm BdG}\|_{F}$, with the invariant norm partitioned between the particle-hole and particle-particle blocks. The specialization $O=\sigma_{\mu}$ recovers the matrix-element identity used in Ref.~\cite{pang2026}; other choices generate the charge, multi-orbital, and site-resolved variants used below. Because the sum rule contains neither excitation energies nor occupations, it does not by itself imply Eq.~\eqref{eq:intro-main} or a susceptibility bound; those conclusions require the additional response assumptions stated above.
	
			The scope of these consequences is correspondingly limited. In the strong-locking limit, $\mathrm{Tr}\,\chi(0)=2\chi_N+o(1)$, whereas the zero-field $s$-wave result, exact for arbitrary $|\mathbf g|/\Delta$ within the reference-Fermi-surface regime, gives $\mathrm{Tr}\,\chi(0)\leq2\chi_N$; neither is a universal bound on unitary BdG states. Fully gapped helicity-diagonal parity mixtures approach the same $T=0$ locking tensor but can exhibit distinct finite-temperature recovery scales. Full cubic invariance of both the superconducting state and locking texture fixes $\Pi=\mathbb I/3$ exactly, although its susceptibility consequence remains asymptotic; measurements along cubic directions yield weaker conclusions unless tensor diagonality is independently established. In a finite Zeeman field, the effective locking fields at $\mathbf k$ and $-\mathbf k$ are generally not antiparallel, invalidating the zero-field helicity reduction and requiring a self-consistent BdG calculation~\cite{pang2026}. The $^{75}$As data on K$_2$Cr$_3$As$_3$ by Yang \emph{et al.}~\cite{Triplet2021} accordingly establish a field-dependent axial suppression pattern, not a zero-field tensor. A placement near the $\hat c$-axis vertex and tension with a specified decoupled-pocket SOC-texture baseline follow only under a fixed-tensor $T\to0$ and zero-field extrapolation; a sheet-resolved inference further requires pocket additivity.
	
				The paper follows this hierarchy. Section~\ref{sec:setup} fixes the notation, and Sec.~\ref{sec:generalized} proves the exact Bogoliubov sum rule. Section~\ref{sec:isoT0}, particularly Sec.~\ref{sec:anisoT0}, distinguishes kinematic constraints from response identities and derives the zero-field strong-locking result. Section~\ref{sec:paritymix} treats helicity-diagonal parity mixing, while Sec.~\ref{sec:beyond-locking} treats intermediate SOC and the finite-field numerical regime. Section~\ref{sec:multiband} develops the multiband extension. Section~\ref{sec:ellipsoid} introduces the Knight-shift ellipsoid and its cubic and hyperfine consequences. Section~\ref{sec:guide} translates these results into experimental diagnostics, followed by the conditional K$_2$Cr$_3$As$_3$ application in Sec.~\ref{sec:K2Cr3As3} and the summary in Sec.~\ref{sec:summary}. Appendix~\ref{app:finite-field-response} gives the finite-field definitions and UBe$_{13}$ construction, Appendix~\ref{app:dynamical-response} develops the dynamical-response and $1/T_1$ limitations, and Appendix~\ref{app:intermediate-support} presents the intermediate-SOC derivation and supporting checks.
		
	\section{Setup and notation}
	\label{sec:setup}
	
	We use the spinful single-band BdG framework of Refs.~\cite{pang2025,pang2026,BCS1957,deGennes,tinkham,Slichter1990}. With $\mathbf k$ summed over the full Brillouin zone, the mean-field Hamiltonian is
	\begin{equation}
		H=\tfrac{1}{2}\sum_{\mathbf{k}}C^{\dagger}_{\mathbf{k}}\mathcal{H}_{\rm BdG}(\mathbf{k})C_{\mathbf{k}}+\text{const.},
		\label{eq:HBdG}
	\end{equation}
	where the factor $1/2$ removes Nambu double counting. In the basis
	\[
	C_{\mathbf{k}}=(c_{\mathbf{k}\uparrow},c_{\mathbf{k}\downarrow},
	c^{\dagger}_{-\mathbf{k}\uparrow},c^{\dagger}_{-\mathbf{k}\downarrow})^{T},
	\]
	the BdG matrix has the block form
	\begin{equation}
		\mathcal{H}_{\rm BdG}(\mathbf{k})=\begin{pmatrix}H_{0}(\mathbf{k}) & \Delta(\mathbf{k})\\ \Delta^{\dagger}(\mathbf{k}) & -H_{0}^{T}(-\mathbf{k})\end{pmatrix}.
		\label{eq:BdGblock}
	\end{equation}
	The normal-state block is
	\begin{equation}
		\begin{aligned}
		H_{0}(\mathbf{k})&=\xi_{\mathbf{k}}\sigma_{0}
		+\mu_{B}\mathbf{H}\cdot\hat\sigma
		+\mathbf{g}_{\mathbf{k}}\cdot\hat\sigma,\\
		\xi_{-\mathbf{k}}&=\xi_{\mathbf{k}},\qquad
		\mathbf{g}_{-\mathbf{k}}=-\mathbf{g}_{\mathbf{k}}.
		\end{aligned}
		\label{eq:H0gen}
	\end{equation}
	where $\sigma_{0}$ is the spin identity, $\hat\sigma=(\sigma_x,\sigma_y,\sigma_z)$, and $\xi_{\mathbf{k}}$ is the spin-independent dispersion measured from the chemical potential. The vector $\mathbf{g}_{\mathbf{k}}$ is a general antisymmetric SOC field; no isotropic form such as $g\mathbf{k}$ is assumed~\cite{Rashba1960,Dresselhaus}.

	The most general two-component gap matrix is~\cite{Balian1963,Sigrist1991,Leggett1975}
	\begin{equation}
		\Delta(\mathbf{k})=i\big[\psi(\mathbf{k})\sigma_{0}
		+\mathbf{d}(\mathbf{k})\cdot\hat\sigma\big]\sigma_{y},
		\label{eq:Delta-gen}
	\end{equation}
		Fermionic antisymmetry requires $\psi(-\mathbf{k})=\psi(\mathbf{k})$ and $\mathbf{d}(-\mathbf{k})=-\mathbf{d}(\mathbf{k})$. Broken inversion symmetry permits singlet--triplet mixing but does not require both components; the pure-singlet and pure-triplet limits remain included. The exact Bogoliubov sum rule derived below does not require unitary pairing. When generalized unitarity is invoked below, it means $\Delta\Delta^{\dagger}\propto\mathbb{I}$. It is required for the exact zero-SOC unitary-triplet response identity, but not for the fully gapped helicity-diagonal strong-SOC identities, whose assumptions are stated separately. Suppressing momentum arguments, the generalized-unitarity condition is
	\begin{equation}
		\psi\mathbf d^{*}+\psi^{*}\mathbf d+i\mathbf d\times\mathbf d^{*}=0.
		\label{eq:unitarity-general}
	\end{equation}
	Only in the pure-triplet limit is this equivalent to $\mathbf d\times\mathbf d^{*}=0$.
	
	Introduce quasiparticle operators $\gamma_{\mathbf{k}s}$ through
	\begin{equation}
		c_{\mathbf{k}\alpha}=\sum_{s=\pm}\!\left(
		u^{\alpha s}_{\mathbf{k}}\gamma_{\mathbf{k}s}
		+v^{\alpha s}_{\mathbf{k}}\gamma^{\dagger}_{-\mathbf{k}s}\right),
		\label{eq:BdGtrans}
	\end{equation}
	where $\alpha$ is a spin index and $s=\pm$ labels the two positive-energy branches, $E_{\mathbf{k}s}\ge 0$. In Nambu form,
	\begin{equation}
		C_{\mathbf{k}}=U_{\mathbf{k}}\Gamma_{\mathbf{k}},\qquad
		U_{\mathbf{k}}=\begin{pmatrix}u_{\mathbf{k}} & v_{\mathbf{k}}\\v^{*}_{-\mathbf{k}} & u^{*}_{-\mathbf{k}}\end{pmatrix},\qquad
		U^{\dagger}_{\mathbf{k}}U_{\mathbf{k}}=\mathbb{I}_{\rm BdG},
		\label{eq:Umatrix}
	\end{equation}
	with
	\[
	\Gamma_{\mathbf{k}}=(\gamma_{\mathbf{k}+},\gamma_{\mathbf{k}-},
	\gamma^{\dagger}_{-\mathbf{k}+},\gamma^{\dagger}_{-\mathbf{k}-})^{T}.
	\]
	In the response notation of Ref.~\cite{pang2026}, $ph$ and $pp$ denote particle-hole and particle-particle channels, while $+$ and $-$ denote intra- and interbranch contributions, respectively. Thus the diagonal static uniform susceptibility is a sum of four non-negative terms,
	\[
	\chi_{\mu\mu}=\chi^{ph+}_{\mu\mu}+\chi^{ph-}_{\mu\mu}
	+\chi^{pp+}_{\mu\mu}+\chi^{pp-}_{\mu\mu}.
	\]
	Throughout, $\mathrm{Tr}_{s}$ denotes the trace over the internal single-particle space and $\mathrm{Tr}_{\rm BdG}$ the trace over its Nambu doubling.
	
	The effective normal-state field and its local spin axis are
	\begin{equation}
		\mathbf{b}_{\mathbf{k}}\equiv\mu_{B}\mathbf{H}+\mathbf{g}_{\mathbf{k}},
		\qquad
		\hat{\mathbf n}_{\mathbf{k}}=
		\frac{\mathbf{b}_{\mathbf{k}}}{|\mathbf{b}_{\mathbf{k}}|},
		\label{eq:ndef}
	\end{equation}
		We assume that $|\mathbf{b}_{\mathbf{k}}|\neq0$ almost everywhere on the reference Fermi surface. The normal-state helicity branch $\lambda=\pm$ has spin expectation parallel to $\lambda\hat{\mathbf n}_{\mathbf{k}}$. At zero Zeeman field, $\hat{\mathbf n}_{\mathbf{k}}=\hat{\mathbf g}_{\mathbf{k}}$.

	We define the reference-Fermi-surface average by
	\[
	\langle X\rangle_{\rm FS}\equiv
	\frac{1}{N(0)}\sum_{\mathbf{k}}X(\mathbf{k})\delta(\xi_{\mathbf{k}}),
	\qquad
	N(0)\equiv\sum_{\mathbf{k}}\delta(\xi_{\mathbf{k}}),
	\]
	where $N(0)$ is the single-spin density of states and $\chi_{N}=2\mu_{B}^{2}N(0)$ is the reference Pauli susceptibility~\cite{Yosida1958,van1928dielectric}. The normal-state texture tensor is
	\begin{equation}
			\Pi_{\mu\nu}(\mathbf{H},\mathbf{g})\equiv\big\langle\hat n_{\mu}(\mathbf{k})\,\hat n_{\nu}(\mathbf{k})\big\rangle_{\rm FS}.
		\label{eq:Pidef}
	\end{equation}
	It is symmetric and positive semidefinite, with $\mathrm{Tr}\,\Pi=1$, and is generally not idempotent because it averages rank-one projectors. In the zero-field strong-locking regime, this tensor controls the residual response. At finite field, it remains a texture descriptor, but it cannot generally be substituted into the zero-field response identity. Table~\ref{tab:notation} in Appendix~\ref{app:notation} collects the notation.
	
\section[Exact Identity]{Exact identity: the Bogoliubov sum rule}
\label{sec:generalized}

The spin matrix-element identity used in Ref.~\cite{pang2026} is a special case of an exact Bogoliubov sum rule for any Hermitian single-particle operator. Its proof uses only Hilbert--Schmidt-norm invariance under the canonical transformation $U_{\mathbf{k}}$. The general form generates spin, charge, multi-orbital, and site-resolved operator budgets; physical responses require additional dynamical assumptions.

\subsection{Nambu embedding of a single-particle observable}

	Let $O=O^{\dagger}$ act on the finite-dimensional internal single-particle space, including spin and, where present, orbital or site indices. For a momentum-independent kernel, the physical one-body observable is $\hat O=\sum_{\mathbf{k}}c^{\dagger}_{\mathbf{k}}Oc_{\mathbf{k}}$. Because $C_{\mathbf{k}}$ combines the $\mathbf{k}$ particle sector with the $-\mathbf{k}$ hole sector, define the even pair contribution
	\begin{equation}
	\hat{O}^{[\mathbf{k}]}\equiv\tfrac{1}{2}\left(
	c^{\dagger}_{\mathbf{k}}Oc_{\mathbf{k}}
	+c^{\dagger}_{-\mathbf{k}}Oc_{-\mathbf{k}}\right).
	\label{eq:Opair}
	\end{equation}
	It has the exact Nambu representation
	\begin{equation}
	\begin{split}
	\hat{O}^{[\mathbf{k}]}&=C^{\dagger}_{\mathbf{k}}\mathcal{O}_{\rm BdG}C_{\mathbf{k}}+\tfrac{1}{2}\mathrm{Tr}_{s}O,\\
	\mathcal{O}_{\rm BdG}&=\tfrac{1}{2}\begin{pmatrix}O & 0\\0 & -O^{T}\end{pmatrix}.
	\end{split}
	\label{eq:Oembedding}
	\end{equation}
	The transpose and minus sign in the hole block follow from fermionic reordering. For Hermitian $O$, $\mathcal{O}_{\rm BdG}$ is Hermitian, and the additive term is a c-number that drops out of connected correlation functions and response matrix elements. Moreover,
	\begin{equation}
	\mathrm{Tr}_{\rm BdG}\mathcal{O}_{\rm BdG}=0,\qquad
	\mathrm{Tr}_{\rm BdG}\mathcal{O}_{\rm BdG}^{2}
	=\tfrac{1}{2}\mathrm{Tr}_{s}O^{2}.
	\label{eq:BdGtrace}
	\end{equation}
	Since $\hat O^{[\mathbf{k}]}=\hat O^{[-\mathbf{k}]}$, summing it over the full Brillouin zone recovers $\hat O$. Equivalently, one may use one representative of each generic pair with weight two, while counting self-inverse momenta $\mathbf{k}\equiv-\mathbf{k}$ modulo a reciprocal vector only once.
	
	\subsection{The Bogoliubov sum rule}
	
	In the quasiparticle basis, define $\tilde{\mathcal{O}}_{\mathbf{k}}=U^{\dagger}_{\mathbf{k}}\mathcal{O}_{\rm BdG}U_{\mathbf{k}}$ and write its Nambu block decomposition as
	\begin{equation}
	2\tilde{\mathcal{O}}_{\mathbf{k}}
	=\begin{pmatrix}
	A_{O}(\mathbf{k}) & B_{O}(\mathbf{k})\\
	B_{O}^{\dagger}(\mathbf{k}) & D_{O}(\mathbf{k})
	\end{pmatrix}
	\equiv
	\begin{pmatrix}
	M_{ph,O}(\mathbf{k}) & M_{pp,O}(\mathbf{k})\\
	M^{\dagger}_{pp,O}(\mathbf{k}) & \widetilde{M}_{ph,O}(\mathbf{k})
	\end{pmatrix}.
	\label{eq:block}
	\end{equation}
	The diagonal blocks contain particle-hole matrix elements, whereas the off-diagonal blocks contain particle-particle matrix elements. Explicitly,
	\begin{align}
	M_{ph,O}(\mathbf{k}) & =u^{\dagger}_{\mathbf{k}}Ou_{\mathbf{k}}-[v^{\dagger}_{-\mathbf{k}}Ov_{-\mathbf{k}}]^{T},\label{eq:Mph-O}\\
	M_{pp,O}(\mathbf{k}) & =u^{\dagger}_{\mathbf{k}}Ov_{\mathbf{k}}-[u^{\dagger}_{-\mathbf{k}}Ov_{-\mathbf{k}}]^{T}.\label{eq:Mpp-O}
	\end{align}
	The lower diagonal block satisfies
	\[
	D_{O}(\mathbf{k})=\widetilde{M}_{ph,O}(\mathbf{k})
	=-M_{ph,O}(-\mathbf{k})^{T}.
	\]
	Superscripts $s_{1}s_{2}$ denote matrix components in the positive-energy quasiparticle-branch basis. They represent helicity indices only in regimes where the quasiparticle branches can be identified with normal-state helicities.

		\begin{theorem}
	Let $O=O^{\dagger}$ act on a finite-dimensional internal single-particle space, and let $U_{\mathbf{k}}$ be a unitary canonical Bogoliubov transformation. For the blocks in Eq.~\eqref{eq:block},
	\begin{equation}
	\tfrac{1}{2}\!\left(\|A_{O}(\mathbf{k})\|_{F}^{2}
	\mathbin{+}\|D_{O}(\mathbf{k})\|_{F}^{2}\right)
	\mathbin{+}\|B_{O}(\mathbf{k})\|_{F}^{2}
	=\mathrm{Tr}_{s}(O^{2}),
	\label{eq:bogoliubov-HS}
	\end{equation}
	where $\|X\|_{F}^{2}\equiv\mathrm{Tr}(X^{\dagger}X)$.
	\end{theorem}

	\emph{Proof.}
	Hilbert--Schmidt-norm invariance and Eq.~\eqref{eq:BdGtrace} give
	\begin{equation}
	\|2\tilde{\mathcal O}_{\mathbf{k}}\|_{F}^{2}
	=4\|\mathcal O_{\rm BdG}\|_{F}^{2}
	=2\,\mathrm{Tr}_{s}(O^{2}).
	\label{eq:HS-invariance}
	\end{equation}
	From Eq.~\eqref{eq:block},
	\[
	\|2\tilde{\mathcal O}_{\mathbf{k}}\|_{F}^{2}
	=\|A_{O}\|_{F}^{2}+\|D_{O}\|_{F}^{2}+2\|B_{O}\|_{F}^{2}.
	\]
	Equating the two expressions and dividing by two proves the result. \hfill$\square$

		Only unitarity of $U_{\mathbf{k}}$ enters the norm proof; its canonical BdG structure identifies the diagonal and off-diagonal blocks as particle-hole and particle-particle sectors. The identity contains no quasiparticle energies, occupations, temperature, pairing data, or Hamiltonian energy scales. Individual matrix elements and the partitioning between sectors are basis dependent, although block norms are invariant under unitary rotations within fixed particle or hole subspaces. It is therefore an operator budget, not a response identity.

		\begin{corollary}[Component form]
		\label{cor:component}
		Using $D_{O}(\mathbf{k})=-M_{ph,O}(-\mathbf{k})^{T}$ and invariance of the Frobenius norm under transposition, Eq.~\eqref{eq:bogoliubov-HS} becomes
	\begin{equation}
	\begin{aligned}
	&\tfrac{1}{2}\sum_{s_{1}s_{2}}\!\left[
	|M^{s_{1}s_{2}}_{ph,O}(\mathbf{k})|^{2}
	\mathbin{+}|M^{s_{1}s_{2}}_{ph,O}(-\mathbf{k})|^{2}\right]\\
	&\qquad
	\mathbin{+}\sum_{s_{1}s_{2}}
	|M^{s_{1}s_{2}}_{pp,O}(\mathbf{k})|^{2}
	=\mathrm{Tr}_{s}(O^{2}).
	\end{aligned}
	\label{eq:bogoliubov-component}
	\end{equation}
	Define the non-negative \emph{sum-rule weights}
	\begin{align}
	W^{s_{1}s_{2}}_{ph,O}(\mathbf{k})&\equiv\tfrac{1}{2}\big[|M^{s_{1}s_{2}}_{ph,O}(\mathbf{k})|^{2}+|M^{s_{1}s_{2}}_{ph,O}(-\mathbf{k})|^{2}\big],\label{eq:Wph-def}\\
	W^{s_{1}s_{2}}_{pp,O}(\mathbf{k})&\equiv|M^{s_{1}s_{2}}_{pp,O}(\mathbf{k})|^{2},\label{eq:Wpp-def}
	\end{align}
	Then
	\begin{equation}
	\boxed{\;\sum_{s_{1}s_{2}}\!\big[W^{s_{1}s_{2}}_{ph,O}(\mathbf{k})+W^{s_{1}s_{2}}_{pp,O}(\mathbf{k})\big]=\mathrm{Tr}_{s}(O^{2}),\;}
		\label{eq:general-sum-rule}
		\end{equation}
		The particle-hole weight is explicitly invariant under $\mathbf{k}\leftrightarrow-\mathbf{k}$. Individual particle-particle components need not be invariant, but their total weight is the same at the paired momenta. At a self-inverse momentum, the particle-hole average reduces to a single squared matrix element.
		\end{corollary}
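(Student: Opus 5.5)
The plan is to derive Eq.~\eqref{eq:general-sum-rule} directly from the Bogoliubov sum rule, Eq.~\eqref{eq:bogoliubov-HS}, by rewriting each block norm in components. Nothing beyond the block identification in Eq.~\eqref{eq:block} and the relation $D_{O}(\mathbf{k})=-M_{ph,O}(-\mathbf{k})^{T}$ stated just before the theorem should be needed.

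First, expand the Frobenius norms entrywise in the positive-energy branch basis. Since $\|X\|_{F}^{2}=\sum_{s_{1}s_{2}}|X^{s_{1}s_{2}}|^{2}$, we have $\|A_{O}(\mathbf{k})\|_{F}^{2}=\sum|M^{s_{1}s_{2}}_{ph,O}(\mathbf{k})|^{2}$ and $\|B_{O}(\mathbf{k})\|_{F}^{2}=\sum|M^{s_{1}s_{2}}_{pp,O}(\mathbf{k})|^{2}$.

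For the lower block, first verify $D_{O}(\mathbf{k})=-M_{ph,O}(-\mathbf{k})^{T}$ if one wants the corollary to be self-contained. The bottom-right block of $U^{\dagger}_{\mathbf{k}}\mathcal{O}_{\rm BdG}U_{\mathbf{k}}$ is $v^{\dagger}_{\mathbf{k}}Ov_{\mathbf{k}}-u^{T}_{-\mathbf{k}}O^{T}u^{*}_{-\mathbf{k}}$. Its second term is the transpose of $u^{\dagger}_{-\mathbf{k}}Ou_{-\mathbf{k}}$. Comparing with Eq.~\eqref{eq:Mph-O} evaluated at $-\mathbf{k}$ gives the stated relation, using that $O$ is Hermitian so that $O^{T}=O^{*}$. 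Transposition and an overall sign both leave $|\cdot|^{2}$ of each entry unchanged, up to relabeling $s_{1}\leftrightarrow s_{2}$. Hence $\|D_{O}(\mathbf{k})\|_{F}^{2}=\sum|M^{s_{1}s_{2}}_{ph,O}(-\mathbf{k})|^{2}$.

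Substituting these three expressions into Eq.~\eqref{eq:bogoliubov-HS} yields Eq.~\eqref{eq:bogoliubov-component}. Grouping the terms according to the definitions in Eqs.~\eqref{eq:Wph-def} and \eqref{eq:Wpp-def} gives the boxed identity, and non-negativity of the weights is manifest.

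The supplementary claims then follow quickly:
\begin{itemize}
\item $W_{ph}$ is symmetric under $\mathbf{k}\to-\mathbf{k}$ by construction.
\item The total $pp$ weight is the same at paired momenta because applying the sum rule at $-\mathbf{k}$ gives the same $ph$ total and the same right side. Equivalently, Eq.~\eqref{eq:Mpp-O} shows $M_{pp,O}(-\mathbf{k})=-M_{pp,O}(\mathbf{k})^{T}$, so the Frobenius norms agree even though individual components are permuted.
\item At a self-inverse momentum the two $ph$ terms coincide, so the average reduces to a single squared matrix element.
\end{itemize}

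The only step needing care is the transpose and sign bookkeeping for the hole block, namely identifying $D_{O}$ with $-M_{ph,O}(-\mathbf{k})^{T}$ consistently with the Nambu conventions of Eq.~\eqref{eq:Umatrix}. The rest is the elementary norm expansion above.
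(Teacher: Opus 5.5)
Your proposal is correct and follows the paper's route: expand the three block norms in Eq.~\eqref{eq:bogoliubov-HS} entrywise, use $D_{O}(\mathbf k)=-M_{ph,O}(-\mathbf k)^{T}$ with transpose invariance of the Frobenius norm, and regroup into the weights. Your explicit check of the hole-block relation and your antisymmetry argument $M_{pp,O}(-\mathbf k)=-M_{pp,O}(\mathbf k)^{T}$ fill in steps the paper leaves implicit; one cosmetic slip is that $v^{\dagger}_{\mathbf k}Ov_{\mathbf k}-u^{T}_{-\mathbf k}O^{T}u^{*}_{-\mathbf k}$ is the lower-right block of $2\tilde{\mathcal O}_{\mathbf k}$, not of $U^{\dagger}_{\mathbf k}\mathcal O_{\rm BdG}U_{\mathbf k}$, and the transpose identity you use holds for any $O$, not only Hermitian $O$.
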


	For a momentum-dependent Hermitian kernel $O(\mathbf{k})=O(\mathbf{k})^{\dagger}$, define the paired contribution using $O(\mathbf{k})$ and $O(-\mathbf{k})$. Its Nambu kernel is
	\[
	\mathcal O_{\rm BdG}(\mathbf{k})
	=\tfrac12\operatorname{diag}[O(\mathbf{k}),-O(-\mathbf{k})^{T}].
	\]
	The same norm argument replaces the right-hand side of Eq.~\eqref{eq:bogoliubov-HS} by
	\[
	\tfrac12\left\{
	\mathrm{Tr}_{s}[O(\mathbf{k})^{2}]
	+\mathrm{Tr}_{s}[O(-\mathbf{k})^{2}]\right\}.
	\]
	If these two traces are equal, as when $O(-\mathbf{k})=\pm O(\mathbf{k})$, the result reduces to $\mathrm{Tr}_{s}[O(\mathbf{k})^{2}]$.
		
		\emph{Relation to Ref.~\cite{pang2026}.} Equation~(14) of Ref.~\cite{pang2026} is recovered by setting $O=\sigma_{\mu}$ in Eq.~\eqref{eq:general-sum-rule}.

\subsection{Specializations}
\label{sec:specializations}

(a) \emph{Spin sector.} For each $O=\sigma_{\mu}$,
\[
\sum_{s_1s_2}\left(W_{ph,\mu}^{s_1s_2}
+W_{pp,\mu}^{s_1s_2}\right)=2,
\]
and summing over $\mu=x,y,z$ gives
\begin{equation}
\sum_{\mu}\sum_{s_{1}s_{2}}\big[W^{s_{1}s_{2}}_{ph,\mu}(\mathbf{k})+W^{s_{1}s_{2}}_{pp,\mu}(\mathbf{k})\big]=6.
\label{eq:spin-total}
\end{equation}

(b) \emph{Scalar-density sector.} For $O=\sigma_{0}$,
\[
\sum_{s_1s_2}\left(W_{ph,0}^{s_1s_2}
+W_{pp,0}^{s_1s_2}\right)=2.
\]
This is the momentum-diagonal scalar-density budget. A finite-$\mathbf{q}$ response additionally requires vertices connecting $\mathbf{k}$ and $\mathbf{k}+\mathbf{q}$, energy denominators, and gauge constraints.

(c) \emph{Multi-orbital sector.} For a Hermitian orbital matrix $\Lambda$,
\[
O=\sigma_{\mu}\otimes\Lambda,\qquad
\mathrm{Tr}_{s}(O^{2})=2\,\mathrm{Tr}_{\rm orb}(\Lambda^{2}).
\]
In particular, $\Lambda=\mathbb{I}_{\rm orb}$ gives $2N_{\rm orb}$. This is an exact operator budget in the full internal space. An additive multi-pocket response additionally requires the decoupled-pocket assumptions stated in Sec.~\ref{sec:multiband}.

(d) \emph{Site-resolved hyperfine sector.} For a real hyperfine tensor, define
\[
O_{{\rm hf},\alpha}^{(\mathbf{R})}
\equiv\sum_{\mu}A^{(\mathbf{R})}_{\alpha\mu}\sigma_{\mu}.
\]
Pauli algebra and Eq.~\eqref{eq:general-sum-rule} give
\begin{equation}
\begin{aligned}
\sum_{s_1s_2}\!\left(W_{ph,O_{{\rm hf},\alpha}}^{s_1s_2}
+W_{pp,O_{{\rm hf},\alpha}}^{s_1s_2}\right)
&=2\sum_{\mu}\left(A^{(\mathbf{R})}_{\alpha\mu}\right)^{2},\\
\sum_{\alpha s_1s_2}\!\left(W_{ph,O_{{\rm hf},\alpha}}^{s_1s_2}
+W_{pp,O_{{\rm hf},\alpha}}^{s_1s_2}\right)
&=2\left\|A^{(\mathbf{R})}\right\|_{F}^{2}.
\end{aligned}
\label{eq:site-budget}
\end{equation}
This exact componentwise budget is quadratic in $A$. It does not imply a universal trace ratio for the measured Knight-shift tensor $K=A\chi$, which is linear in $A$. Site-independent normalized simplex coordinates follow only for scalar or coaxial diagonal hyperfine tensors; a general non-coaxial tensor must be determined independently.

\begin{table*}[tb]
		\caption{Hermitian single-particle operators used in Sec.~\ref{sec:specializations}, their $\mathrm{Tr}_{s}(O^{2})$ in the Bogoliubov sum rule [Eq.~\eqref{eq:general-sum-rule}], and their operator-budget roles. These entries constrain matrix-element weights, not physical responses without further assumptions.}
		\label{tab:operators}
		\small
		\renewcommand\arraystretch{2.0}
		\setlength{\tabcolsep}{4.0ex}
		\begin{tabular*}{\textwidth}{@{\extracolsep{\fill}} l l c l @{}}
			\hline\hline
			Operator(s) & Specialization & $\mathrm{Tr}_{s}(O^{2})$ & Budget/role\\
			\hline
			$\sigma_{\mu}$ & Spin component & $2$ & Componentwise spin budget\\
			$\{\sigma_{\mu}\}_{\mu=x,y,z}$ & Total spin & $\sum_\mu\mathrm{Tr}_s(\sigma_\mu^2)=6$ & Summed spin-weight budget\\
			$\sigma_{0}$ & Scalar density & $2$ & Momentum-diagonal density budget\\
			$O_{{\rm hf},\alpha}^{(\mathbf{R})}$ & Site-$\mathbf{R}$ hyperfine component & $2\sum_{\mu}(A^{(\mathbf{R})}_{\alpha\mu})^{2}$ & Exact site-resolved budget\\
			$\sigma_{\mu}\otimes\mathbb{I}_{\rm orb}$ & Full-space spin operator & $2N_{\rm orb}$ & Full-space operator budget\\
			$\sigma_{\mu}\otimes\Lambda$ & Orbital-weighted spin & $2\,\mathrm{Tr}(\Lambda^{2})$ & Orbital-weighted operator budget\\
			\hline\hline
		\end{tabular*}
\end{table*}

The spin sector drives the remainder of the paper; the other specializations define its scope and qualify the experimental protocols and decoupled-pocket baseline.

\subsection{$T=0$ constraint on static uniform response}
\label{sec:T0-consequence}

At $T=0$, the particle-hole channel vanishes for the static uniform susceptibility generated by $\hat O$ whenever there is no residual zero-energy quasiparticle contribution, in particular for a fully gapped spectrum~\cite{pang2026}. For a source field $h$ coupled as $-h\hat O$, the remaining response is
\begin{equation}
\chi_{O}(0)=\sum_{\mathbf{k}}\sum_{s_{1}s_{2}}\frac{W^{s_{1}s_{2}}_{pp,O}(\mathbf{k})}{E_{\mathbf{k}s_{1}}+E_{-\mathbf{k}s_{2}}}.
\label{eq:chiO-T0}
\end{equation}
For magnetic spin response, the corresponding factor $\mu_B^2$ is restored.
Equation~\eqref{eq:general-sum-rule} bounds the numerator pointwise,
\begin{equation}
\sum_{s_{1}s_{2}}W^{s_{1}s_{2}}_{pp,O}(\mathbf{k})\le\mathrm{Tr}_{s}(O^{2}),
\label{eq:p-pointwise}
\end{equation}
for every Hermitian $O$ and every $\mathbf{k}$. This is the strongest response consequence without additional spectral assumptions. Because the denominators in Eq.~\eqref{eq:chiO-T0} depend on the quasiparticle spectrum and the normal-state reference has a different channel decomposition, the numerator bound implies neither $\chi_{\mu\mu}(0)\le2\chi_N/3$ nor a universal trace bound.

This distinction is explicit in the intermediate-SOC calculations of Ref.~\cite{pang2026}: unitary opposite-spin-paired states can have a transverse component exceeding $\chi_N$ while the longitudinal response remains nonzero, giving $\mathrm{Tr}\,\chi(0)>2\chi_N$ within the stated BdG framework. Deviations on either side of $2\chi_N$ therefore do not uniquely diagnose non-unitary pairing. First-order Pauli-limited transitions~\cite{Maki1964,Clogston1962,Chandrasekhar62}, FFLO states~\cite{FF_state,LO_state}, pair-density waves~\cite{PDW}, and correlations beyond mean-field theory require separate analyses.

Accordingly, each susceptibility identity below is introduced separately with its spectral, pairing, and normalization assumptions.

	\section[Controlled Responses]{Response identities in controlled limits}
	\label{sec:isoT0}
	
	\subsection{Benchmark limits and normal-state response}
	
		Table~\ref{tab:sat} collects $\mathrm{Tr}\,\chi(0)/\chi_{N}$ in controlled limiting cases reported in Ref.~\cite{pang2026}. The trace value $2$ is the leading zero-field strong-locking limit and an exact $T=0$ identity for fully gapped zero-SOC unitary-triplet states; it is not saturation of a universal upper bound.
	
	\begin{table*}[tb]
				\caption{Residual susceptibility in controlled limits, in units of $\chi_N$. The zero-SOC triplet rows assume full gaps; the ESP row also assumes an in-plane-isotropic $\mathbf d$ texture. Strong-SOC rows give leading zero-field values for full gaps, $\Delta_{\max}\ll g_{\min}$, $g_{\max}\ll E_F$, and isotropic locking $\Pi=\mathbb I/3$. Strong locking alone fixes the trace, not individual components. Simplex mapping applies only in the stated locking limits.}
		\label{tab:sat}
		\small
		\renewcommand\arraystretch{2.0}
		\setlength{\tabcolsep}{2.0ex}
			\begin{tabular*}{\textwidth}{@{\extracolsep{\fill}} l c c c c l l @{}}
			\hline\hline
			Pairing/limit & $\chi_{xx}(0)$ & $\chi_{yy}(0)$ & $\chi_{zz}(0)$ & Tr$/\chi_{N}$ & Locking $\hat{\mathbf n}_{\mathbf{k}}$ & Mechanism \\
			\hline
			$s$-wave, $\mathbf{g}=\mathbf{H}=0$ & $0$ & $0$ & $0$ & $0$ & --- & No locking (Yosida)\\
			$s$-wave, isotropic strong SOC & $2/3$ & $2/3$ & $2/3$ & $2$ & $\hat{\mathbf g}_{\mathbf{k}}$ & SOC ($\langle\hat g_{\mu}^{2}\rangle=1/3$)\\
			OSP ($\mathbf{d}\parallel\hat z$), $\mathbf{g}=\mathbf{H}=0$ & $1$ & $1$ & $0$ & $2$ & $\hat z$ & $\mathbf{d}$-vector\\
			ESP ($\mathbf{d}\in xy$), $\mathbf{g}=\mathbf{H}=0$ & $1/2$ & $1/2$ & $1$ & $2$ & $\hat d_{\mathbf{k}}\in xy$ & $\mathbf{d}$-vector\\
			Unitary $p$-wave, isotropic strong SOC & $2/3$ & $2/3$ & $2/3$ & $2$ & $\hat{\mathbf g}_{\mathbf{k}}$ & SOC dominates\\
			\hline\hline
			\end{tabular*}
	\end{table*}
	
	The table compares distinct controlled limits and must not be interpolated into a general inequality. Section~\ref{sec:intermediate} gives one exact interpolation for the zero-field $s$-wave problem; other pairing states require their own calculation.
	
			At $\mathbf H=0$ and $T_{c}^{+}$, the intraband and interband ph weights for $O=\sigma_{\mu}$ are
		\begin{align}
			W^{\rm intra}_{ph,\mu}(\mathbf{k})\equiv\sum_{s}W^{ss}_{ph,\mu}(\mathbf{k})&=2(\hat{\mathbf n}_{\mathbf{k}}\cdot\hat e_{\mu})^{2},\label{eq:me-intra}\\
			W^{\rm inter}_{ph,\mu}(\mathbf{k})\equiv\sum_{s\ne s'}W^{ss'}_{ph,\mu}(\mathbf{k})&=2[1-(\hat{\mathbf n}_{\mathbf{k}}\cdot\hat e_{\mu})^{2}],\label{eq:me-inter}
		\end{align}
			with the interband sum including both off-diagonal elements. Assume that both helicity sheets cross the Fermi level, and define $g_{\max}\equiv\max_{\mathbf k\in{\rm FS}}|\mathbf g_{\mathbf k}|$. In the reference-Fermi-surface limit $g_{\max}\ll E_F$, inserting these weights into Eqs.~(13c)--(13d) of Ref.~\cite{pang2026}, using $-df/dE\to\delta(E)$ for the intraband term and the corresponding small-splitting limit of the interband difference quotient, gives
		\begin{equation}
			\chi^{ph+}_{\mu\mu}(T_{c})/\chi_{N}=\Pi_{\mu\mu},\quad \chi^{ph-}_{\mu\mu}(T_{c})/\chi_{N}=1-\Pi_{\mu\mu}.
			\label{eq:phpartsTc}
		\end{equation}
		The two contributions add to $\chi^{ph}_{\mu\mu}(T_{c})=\chi_{N}$, so $\mathrm{Tr}\,\chi(T_{c})=3\chi_{N}$ at this order for any reference FS and locking texture. This is a weak-band-splitting normal-state benchmark, not an arbitrary-SOC statement.
	
		\subsection{Strong-SOC locking identity at $T=0$}
		\label{sec:anisoT0}
		
			We now impose the assumptions needed for a controlled response identity: $\mathbf H=0$, a full gap, both helicity sheets crossing the Fermi level, negligible transverse inter-helicity pairing, and
			\begin{equation}
				\Delta_{\max}\ll g_{\min},\qquad g_{\max}\ll E_F,\qquad
				g_{\min}\equiv\min_{\mathbf k\in{\rm FS}}|\mathbf g_{\mathbf k}|.
				\label{eq:strong-locking-scales}
			\end{equation}
				The first inequality ensures that the SOC energy scale dominates pairing, while the second preserves the two-sheet reference-Fermi-surface description. We derive the leading response in three steps: helicity block-diagonalization, longitudinal Yosida cancellation in the intra-helicity sector, and evaluation of the inter-helicity pp contribution.
	
			(i) \emph{Helicity block-diagonalization.} In the helicity basis obtained by rotating at each $\mathbf{k}$ to diagonalize $H_{0}(\mathbf{k})$, a helicity-diagonal gap has amplitudes $\Delta_\pm(\mathbf k)$ and $\Delta_\perp=0$. To leading order in $\Delta_{\max}/g_{\min}$, the BdG Hamiltonian decouples into two intra-helicity $2\times2$ BCS blocks
	\begin{equation}
		H^{\lambda}_{\rm BdG}(\mathbf{k})=\begin{pmatrix}\xi_{\lambda}(\mathbf{k}) & \Delta_{\lambda}(\mathbf{k})\\ \Delta^{*}_{\lambda}(\mathbf{k}) & -\xi_{\lambda}(\mathbf{k})\end{pmatrix}, \quad \xi_{\lambda}\equiv\xi_{\mathbf{k}}+\lambda|\mathbf{g}_{\mathbf{k}}|,
		\label{eq:helicity-block}
	\end{equation}
		with $E_{\lambda}=\sqrt{\xi_{\lambda}^{2}+|\Delta_{\lambda}|^{2}}$ and standard BCS coherence-factor magnitudes $|u_{\lambda}|^{2}=(1+\xi_{\lambda}/E_{\lambda})/2$, $|v_{\lambda}|^{2}=(1-\xi_{\lambda}/E_{\lambda})/2$; the relative phase is fixed by $\Delta_{\lambda}$. Each block describes pairing of the helicity Kramers pair $\{|\lambda,\mathbf{k}\rangle,|\lambda,-\mathbf{k}\rangle\}$, whose spin expectations are $\lambda\hat{\mathbf n}_{\mathbf{k}}$ and $\lambda\hat{\mathbf n}_{-\mathbf{k}}=-\lambda\hat{\mathbf n}_{\mathbf{k}}$, respectively, for antisymmetric SOC.
	
		(ii) \emph{Intra-helicity sector: longitudinal Yosida cancellation.} Within each helicity sheet, the spin operator $\sigma_{\mu}$ projects onto the local pseudo-spin quantization axis $\hat{\mathbf n}_{\mathbf{k}}$ as a longitudinal operator with a weight $(\hat{\mathbf n}_{\mathbf{k}}\cdot\hat e_{\mu})$; the component transverse to $\hat{\mathbf n}_{\mathbf{k}}$ connects the two helicity sheets and contributes only to the inter-helicity sector below. The intra-helicity contribution to $\chi_{\mu\mu}$ is therefore the longitudinal-Yosida response of a fully gapped BCS condensate,
	\begin{equation}
		\chi^{\rm intra}_{\mu\mu}(T)=\chi_{N}\sum_{\lambda}w_{\lambda}\big\langle(\hat{\mathbf n}_{\mathbf{k}}\cdot\hat e_{\mu})^{2}\,Y_{\lambda}(\mathbf{k},T)\big\rangle_{\lambda},
		\label{eq:intra-Yosida}
	\end{equation}
	with $w_{\lambda}\equiv N_{\lambda}(0)/\sum_{\lambda'=\pm}N_{\lambda'}(0)$ the normalized helicity-sheet DOS fraction, $\sum_{\lambda}w_{\lambda}=1$, and $Y_{\lambda}$ the helicity-resolved Yosida function. For a fully gapped state $Y_{\lambda}(\mathbf{k},0)=0$, so
	\begin{equation}
		\chi^{\rm intra}_{\mu\mu}(0)=0.
		\label{eq:intra-zero}
	\end{equation}
		The vanishing is exact at this order, not parametrically small in $\Delta/|\mathbf{g}|$: it is the standard $T=0$ suppression of the longitudinal spin response on a fully polarized BCS sheet.
	
	(iii) \emph{Inter-helicity sector: coherence factor at strong locking.} The inter-helicity pp contribution to $\chi_{\mu\mu}$ at $T=0$ is
	\begin{equation}
		\chi^{\rm inter}_{\mu\mu}(0)=\mu_{B}^{2}\sum_{\mathbf{k}}\sum_{\lambda\ne\lambda'}\frac{W^{\lambda\lambda'}_{pp,\mu}(\mathbf{k})}{E_{\mathbf{k}\lambda}+E_{-\mathbf{k}\lambda'}}.
		\label{eq:inter-pp-chi}
	\end{equation}
	The pp matrix element factorizes into a spin-matrix-element factor and a Bogoliubov coherence factor,
	\begin{equation}
		W^{\lambda\lambda'}_{pp,\mu}(\mathbf{k})=\big[1-(\hat{\mathbf n}_{\mathbf{k}}\cdot\hat e_{\mu})^{2}\big]\,\mathcal{C}^{\lambda\lambda'}(\mathbf{k}),
		\label{eq:Wpp-factored}
	\end{equation}
	with the inter-helicity coherence factor
	\begin{equation}
		\mathcal{C}^{\lambda\lambda'}(\mathbf{k})\equiv\big|u_{\lambda}(\mathbf{k})v_{\lambda'}(\mathbf{k})-u_{\lambda'}(\mathbf{k})v_{\lambda}(\mathbf{k})\big|^{2}.
		\label{eq:Cdef}
	\end{equation}
			For antisymmetric SOC, $|\mathbf{g}_{-\mathbf{k}}|=|\mathbf{g}_{\mathbf{k}}|$ implies $\xi_{\lambda}(-\mathbf{k})=\xi_{\lambda}(\mathbf{k})$, so $u_{\lambda},v_{\lambda}$ are even in $\mathbf{k}$. For illustration, on the $+$ FS one has $\xi_{+}=0$ and $\xi_{-}=-2|\mathbf{g}_{\mathbf{k}}|$:
			\begin{align}
				|u_{+}(\mathbf{k})|&=|v_{+}(\mathbf{k})|=\tfrac{1}{\sqrt{2}},\\
				|u_{-}(\mathbf{k})|&=O(\Delta_{\max}/|\mathbf{g}_{\mathbf{k}}|),\; |v_{-}(\mathbf{k})|=1+O(\Delta_{\max}^{2}/|\mathbf{g}_{\mathbf{k}}|^{2}),\\
				\mathcal{C}^{+-}(\mathbf{k})&=\tfrac{1}{2}+O(\Delta_{\max}/|\mathbf{g}_{\mathbf{k}}|).\label{eq:C-evaluated}
			\end{align}
			The symmetry of Eq.~\eqref{eq:Cdef} under $\lambda\leftrightarrow\lambda'$ gives $\mathcal{C}^{-+}(\mathbf{k})=\mathcal{C}^{+-}(\mathbf{k})$, and the analogous result holds on the $-$ FS by $+\leftrightarrow-$. The denominator on either FS is $E_{\mathbf{k}\lambda}+E_{-\mathbf{k}\lambda'}=2|\mathbf{g}_{\mathbf{k}}|+O(\Delta_{\max})$.
		
				The sheetwise expansion is not uniform over the entire $\xi$ integration, whose range extends from the gap scale to the inter-helicity splitting. After the $\xi$ integration, the strong-locking limit recovers the transverse matrix-element factor of the normal-state benchmark Eq.~\eqref{eq:phpartsTc}:
			\begin{equation}
				\chi^{\rm inter}_{\mu\mu}(0)=\chi_{N}\,(1-\Pi_{\mu\mu})
				+o(\chi_N),\qquad \frac{\Delta_{\max}}{g_{\min}}\to0.
				\label{eq:inter-result}
		\end{equation}
		For the zero-field $s$-wave problem, the exact result of Sec.~\ref{sec:intermediate} sharpens the remainder to $O[\chi_N(\Delta_{\max}/g_{\min})^2\ln(g_{\min}/\Delta_{\max})]$.
		Thus the inter-helicity pp response approaches the normal-state inter-helicity Van Vleck response as $\Delta_{\max}/g_{\min}\to0$.
	
		The same derivation extends to off-diagonal $\chi_{\mu\nu}$ by replacing the diagonal spin-matrix-element factor $1-(\hat{\mathbf n}_{\mathbf{k}}\cdot\hat e_{\mu})^{2}$ in Eq.~\eqref{eq:Wpp-factored} with $\delta_{\mu\nu}-\hat n_{\mu}(\mathbf{k})\hat n_{\nu}(\mathbf{k})$, the only modification being which Cartesian components of $\hat\sigma$ enter the matrix element. Combining Eqs.~\eqref{eq:intra-zero} and~\eqref{eq:inter-result} (and its off-diagonal analogue) gives
			\begin{equation}
				\frac{\chi_{\mu\nu}(0)}{\chi_N}
				=\delta_{\mu\nu}-\Pi_{\mu\nu}+o(1),\qquad
				\frac{\Delta_{\max}}{g_{\min}}\to0,\quad
				\frac{g_{\max}}{E_F}\to0,
				\label{eq:strongSOCid}
			\end{equation}
			Equation~\eqref{eq:Fs} supplies the explicit zero-field $s$-wave coefficient, whose strong-locking expansion $F_s(\lambda)=1-\ln(2\lambda)/\lambda^2+\cdots$ is given in Appendix~\ref{app:intermediate-support}. Other fully gapped helicity-diagonal states have the same leading tensor structure, while the detailed subleading term depends on their gaps and Fermi-surface weights.
		
		At leading order, $\mathrm{Tr}\,\chi(0)=2\chi_N$, tilted textures can generate off-diagonal $\chi_{\mu\nu}(0)$, and the eigenvectors of $\Pi$ define the spin-locking principal axes. These are strong-locking consequences, not universal BdG bounds.
		
			\emph{Off-diagonal texture example.} For disconnected quasi-1D sheets with $|k_{\parallel}|\simeq k_F$, let $\mathbf g_{\mathbf k}=k_{\parallel}(\alpha\hat x+\beta\hat z)$. Then $\Pi_{xz}=\alpha\beta/(\alpha^2+\beta^2)\ne0$, producing an off-diagonal electronic response $\chi_{xz}$. This response appears as $K_{xz}$ only for scalar or independently calibrated hyperfine coupling.
		
		\subsection{Zero-SOC unitary-triplet analogue and unified form}
		
				At zero SOC, a fully gapped unitary-triplet state~\cite{Leggett1975,Sigrist1991} obeys the analogous $T=0$ identity
		\begin{equation}
			\chi_{\mu\nu}(0)=\chi_{N}[\delta_{\mu\nu}-\Pi^{(d)}_{\mu\nu}],\quad
			\Pi^{(d)}_{\mu\nu}
			=\left\langle\frac{d_\mu d_\nu^{*}}{|\mathbf d|^2}\right\rangle_{\rm FS}.
			\label{eq:strong-triplet}
		\end{equation}
		Unitarity, $\mathbf d\times\mathbf d^{*}=0$, makes $\Pi^{(d)}$ real, symmetric, and positive semidefinite, with $\mathrm{Tr}\,\Pi^{(d)}=1$.
	Equations~\eqref{eq:strongSOCid} and~\eqref{eq:strong-triplet} unify as
	\begin{equation}
			\boxed{\;\frac{\chi_{\mu\nu}(0)}{\chi_{N}}
			=\delta_{\mu\nu}-\Pi_{\mu\nu}+o(1),\;}
			\label{eq:unified}
		\end{equation}
		with $\Pi=\langle\hat{\mathbf g}\hat{\mathbf g}\rangle_{\rm FS}$ in the zero-field strong-SOC limit and $\Pi=\Pi^{(d)}$ in the exact zero-SOC unitary-triplet limit. The $o(1)$ term is identically zero in the latter case; in the SOC case it vanishes under the two scale limits in Eq.~\eqref{eq:strong-locking-scales}. No corresponding general substitution by the finite-field vector $\mu_B\mathbf H+\mathbf g_{\mathbf k}$ is valid.

	Equation~\eqref{eq:unified} is the controlled response identity underlying the geometric diagnostic. The next section places parity-mixed singlet-triplet states in the same zero-field language, after which Sec.~\ref{sec:intermediate} treats an arbitrary SOC-to-gap ratio for an $s$-wave gap within the reference-Fermi-surface regime. Section~\ref{sec:finiteH} explains why finite Zeeman field is not obtained by a simple replacement of the locking vector.

	\section[Parity-Mixed Pairing]{Parity-mixed pairing}
	\label{sec:paritymix}
	
	\subsection{Helicity-basis structure}
	\label{sec:paritymix-structure}
	
		In a noncentrosymmetric system, parity is not a good quantum number, and a generic superconducting state can mix even spin-singlet and odd spin-triplet components~\cite{Frigeri2004,Samokhin2007,Edelstein2008,bauer2012non,Smidman17}. The helicity-basis decomposition below requires only $\mathbf H=0$; the response identities in the following subsection additionally impose the two-sheet strong-locking assumptions of Eq.~\eqref{eq:strong-locking-scales}. Generalized unitarity is not imposed in this section, so the two helicity gaps may have unequal magnitudes. We use the pairing matrix in Eq.~\eqref{eq:Delta-gen}, with $\psi(-\mathbf{k})=\psi(\mathbf{k})$ and $\mathbf d(-\mathbf{k})=-\mathbf d(\mathbf{k})$. At $\mathbf{H}=0$, the natural spin-quantization axis is the helicity direction $\hat{\mathbf g}_{\mathbf{k}}=\mathbf{g}_{\mathbf{k}}/|\mathbf{g}_{\mathbf{k}}|$. Decompose the triplet part into longitudinal and transverse pieces,
	\begin{equation}
		\mathbf{d}(\mathbf{k})=d_{\parallel}(\mathbf{k})\hat{\mathbf g}_{\mathbf{k}}+\mathbf{d}_{\perp}(\mathbf{k}),\quad d_{\parallel}(\mathbf{k})=\mathbf{d}(\mathbf{k})\cdot\hat{\mathbf g}_{\mathbf{k}}.
		\label{eq:paritymix-dsplit}
	\end{equation}
	In the helicity basis of the normal-state Hamiltonian, the pairing matrix then takes the schematic form
	\begin{equation}
		\tilde\Delta(\mathbf{k})=
		\begin{pmatrix}
			\Delta_{+}(\mathbf{k}) & \Delta_{\perp}(\mathbf{k})\\
			-\Delta_{\perp}(\mathbf{k}) & \Delta_{-}(\mathbf{k})
		\end{pmatrix},
		\quad
		\Delta_{\pm}(\mathbf{k})=\psi(\mathbf{k})\pm d_{\parallel}(\mathbf{k}),
		\label{eq:paritymix-helicity-gap}
	\end{equation}
		where $\Delta_{\perp}(\mathbf{k})$ is set by the transverse component $\mathbf{d}_{\perp}(\mathbf{k})$ and mixes the two helicity sheets. The helicity-diagonal limit is exact when $\mathbf d_{\perp}=0$, including $\mathbf d\parallel\mathbf g$. If $\mathbf d_{\perp}\neq0$ but its magnitude is small compared with the helicity splitting $2|\mathbf g_{\mathbf k}|$, inter-helicity-pairing effects are perturbative and the helicity-diagonal response is the controlled leading approximation. In that limit, parity mixing changes the gap amplitudes on the $\pm$ sheets but not the underlying locking texture $\hat{\mathbf n}_{\mathbf{k}}=\hat{\mathbf g}_{\mathbf{k}}$.
	
	\subsection{$T=0$ consequences and finite-$T$ recovery}
	\label{sec:paritymix-consequences}
	
	We now compute $\chi_{\mu\nu}(T)$ in the helicity-diagonal limit $\Delta_{\perp}=0$. Rotating to the helicity basis with unitary $R_{\mathbf{k}}$ diagonalizing $H_{0}(\mathbf{k})$ at $\mathbf{H}=0$, the spin operator becomes
	\begin{equation}
		R^{\dagger}_{\mathbf{k}}(\hat e_{\mu}\cdot\hat\sigma)R_{\mathbf{k}}=(\hat{\mathbf g}_{\mathbf{k}}\cdot\hat e_{\mu})\tau_{3}+\sqrt{1-(\hat{\mathbf g}_{\mathbf{k}}\cdot\hat e_{\mu})^{2}}\,\tau_{\perp,\mu}(\mathbf{k}),
		\label{eq:paritymix-sigma-helicity}
	\end{equation}
		with $\tau_{3}$ diagonal and $\tau_{\perp,\mu}$ the normalized off-diagonal Hermitian helicity matrix for spin component $\mu$. In the helicity-diagonal limit, the BdG Hamiltonian decouples into one $2\times 2$ block for each helicity sheet, with gap $|\Delta_{\lambda}(\mathbf{k})|$ and standard BCS coherence factors $u_{\lambda},v_{\lambda}$. Substituting Eq.~\eqref{eq:paritymix-sigma-helicity} into the Kubo expression for $\chi_{\mu\nu}$ shows that the intra-helicity sector, driven by $\tau_{3}$, contributes a BCS-Yosida response on each sheet. The inter-helicity sector, driven by $\tau_{\perp,\mu}$, reproduces the strong-SOC locking-tensor calculation of Sec.~\ref{sec:anisoT0}. Adding the two gives
	\begin{equation}
		\begin{split}
			\frac{\chi_{\mu\nu}(T)}{\chi_{N}}
			={}& \delta_{\mu\nu}-\Pi_{\mu\nu} \\
				& + \sum_{\lambda=\pm}w_{\lambda}
				\left\langle
				Y_{\lambda}(\mathbf{k},T)\hat g_{\mu}(\mathbf{k})\hat g_{\nu}(\mathbf{k})
				\right\rangle_{\lambda} \\
					& + o(1),
		\end{split}
		\label{eq:paritymix-chiT}
	\end{equation}
			Here $\langle\cdots\rangle_{\lambda}$ is the FS average over the $\lambda$-helicity sheet, and $w_{\lambda}\equiv N_{\lambda}(0)/\sum_{\lambda'=\pm}N_{\lambda'}(0)$. The standard helicity-resolved Yosida function and corresponding quasiparticle energy are
	\begin{equation}
		\begin{aligned}
			Y_{\lambda}(\mathbf{k},T)&=-\!\int_{-\infty}^{\infty}\!d\xi\,\frac{\partial f}{\partial E}\bigg|_{E=E_{\lambda}(\xi,\mathbf{k})},\\
			E_{\lambda}(\xi,\mathbf{k})&=\sqrt{\xi^{2}+|\Delta_{\lambda}(\mathbf{k})|^{2}}.
		\end{aligned}
	\end{equation}
			respectively. The tensor $\Pi_{\mu\nu}\equiv\sum_{\lambda}w_{\lambda}\langle\hat g_{\mu}\hat g_{\nu}\rangle_{\lambda}$ is the helicity-resolved form of Eq.~\eqref{eq:Pidef} in the reference-Fermi-surface limit. The $o(1)$ term vanishes as $\Delta_{\max}/g_{\min}\to0$ and $g_{\max}/E_F\to0$; its detailed subleading form depends on the gaps and Fermi-surface weights. Sizable $\Delta_\perp$ requires a separate inter-helicity calculation.
	
	For a fully gapped state, $Y_{\lambda}\to 0$ as $T\to 0$, so
	\begin{equation}
			\chi_{\mu\nu}(0)/\chi_N
			=\delta_{\mu\nu}-\Pi_{\mu\nu}+o(1).
		\label{eq:paritymix-T0}
	\end{equation}
	A fully gapped helicity-diagonal parity mixture therefore does \emph{not} generate a new $T=0$ simplex point: it maps to the same locking-tensor geometry as the corresponding pure locking texture. What changes is the finite-$T$ recovery, because the two helicity sheets generally carry different gaps $\Delta_{\pm}$. Taking the trace of Eq.~\eqref{eq:paritymix-chiT},
	\begin{equation}
		\frac{\mathrm{Tr}\,\chi(T)}{\chi_{N}}
		=
		2+\sum_{\lambda=\pm}w_{\lambda}\langle Y_{\lambda}(\mathbf{k},T)\rangle_{\lambda}+o(1).
		\label{eq:paritymix-trace}
	\end{equation}
		This trace relation shows that distinct $\Delta_{\pm}$ on helicity sheets with appreciable weights $w_{\pm}$ can produce a broadened or resolved two-scale recovery while leaving the $T=0$ locking geometry unchanged. Translating this recovery to the measured Knight shift requires the controlled hyperfine mapping discussed later.
	
		Such recovery is consistent with a helicity-diagonal parity mixture but is neither necessary nor sufficient: it can also arise from multiband gaps, strong gap anisotropy or nodes, and pair-breaking impurity broadening. Conversely, it may be unresolved when $\Delta_{+}$ and $\Delta_{-}$ are similar, one helicity sheet has small weight, or experimental resolution is insufficient. Independent band-structure constraints, gap-anisotropy probes, and sample-quality assessment are therefore required. The finite-$T$ ellipsoid is a consistency check on parity admixture, not a standalone identification.
	
		If $\Delta_{\perp}$ is sizable, if one helicity sheet is unpaired or gapless, or if inter-helicity pairing dominates, Eq.~\eqref{eq:paritymix-T0} should be treated as a baseline rather than a controlled identity. A small but nonzero gap does not invalidate the strict $T=0$ result, although it may prevent measurements from reaching the asymptotic regime. In a finite Zeeman field, the paired states at $\mathbf k$ and $-\mathbf k$ generally have different spin axes, so the zero-field helicity reduction does not survive by replacing $\Pi$ with a field-dependent normal-state texture; the self-consistent BdG problem is required (Sec.~\ref{sec:finiteH}).

	\section[Intermediate SOC and Finite Field]{Beyond zero-field strong locking: intermediate SOC and finite field}
	\label{sec:beyond-locking}

	\subsection{Intermediate SOC at zero field}
	\label{sec:intermediate}
	
		The preceding results allow general FS shapes but assume the strict strong-locking limit. We now relax that limit at $\mathbf H=0$ for the analytically tractable pure $s$-wave state. Specifically, we allow $|\mathbf g_{\mathbf k}|/\Delta$ to be arbitrary while retaining the reference-Fermi-surface regime $g_{\max}\ll E_F$. Throughout this section, $\psi(\mathbf k)=\Delta$ and $\mathbf d(\mathbf k)=0$, with $\Delta$ denoting the singlet amplitude.
	
	Appendix~\ref{app:intermediate-support} gives the integral reduction and closed-form evaluation. Choosing the global gauge $\Delta>0$ and defining $\lambda_{\mathbf k}=|\mathbf g_{\mathbf k}|/\Delta$, the exact result is
	\begin{equation}
			\boxed{\;F_{s}(\lambda)\equiv1-\frac{\operatorname{asinh}\lambda}{\lambda\sqrt{1+\lambda^{2}}},\;}
		\label{eq:Fs}
	\end{equation}
	and
	\begin{equation}
		\boxed{\;\frac{\chi_{\mu\nu}(0)}{\chi_{N}}=\langle F_{s}(\lambda_{\mathbf{k}})[\delta_{\mu\nu}-\hat n_{\mu}\hat n_{\nu}]\rangle_{\rm FS}.\;}
		\label{eq:chi-explicit}
	\end{equation}

	Direct differentiation gives $F_s'(\lambda)>0$ for $\lambda>0$, with $F_s(0)=0$ and $\lim_{\lambda\to\infty}F_s(\lambda)=1$. Equation~\eqref{eq:chi-explicit} therefore retains the transverse locking tensor at each FS point but weights it by $F_s(\lambda_{\mathbf k})\in[0,1]$. Taking the trace gives
	\[
		\mathrm{Tr}\,\chi(0)/\chi_N
		=2\langle F_s(\lambda_{\mathbf k})\rangle_{\rm FS}\le 2,
	\]
	with strict inequality for any finite SOC scale. If the SOC magnitude is uniform over the FS, the strict monotonicity of $F_s$ permits the inversion
	\[
		\lambda=F_s^{-1}\!\left[\mathrm{Tr}\,\chi(0)/(2\chi_N)\right].
	\]
	For nonuniform SOC magnitude, the inversion instead defines the nonlinear FS average
	\begin{equation}
		\lambda_{\rm eff}\equiv F_{s}^{-1}\big(\langle F_{s}(\lambda_{\mathbf{k}})\rangle_{\rm FS}\big),
		\label{eq:lambda-eff}
	\end{equation}
	which need not equal $\lambda_{\mathbf k}$ at any FS point. Appendix~\ref{app:intermediate-support} gives its narrow-distribution expansion and Jensen bounds; Protocol~B gives the corresponding experimental inversion.
	
		Geometrically, define
		\[
			\Pi^{\rm eff}_{\mu\nu}
			=\langle F_s(\lambda_{\mathbf k})
			\hat n_\mu\hat n_\nu\rangle_{\rm FS},
		\]
		and let $p_i$ be its eigenvalues. Since $\mathrm{Tr}\,\Pi^{\rm eff}=S\equiv\langle F_s\rangle_{\rm FS}\le1$, they satisfy
		\[
			p_i\ge0,\qquad \sum_i p_i=S\le1.
		\]
			The intermediate-SOC map is therefore a three-dimensional tetrahedral region whose $S=1$ face is the strong-locking simplex and whose $S=0$ vertex is the origin [Fig.~\ref{fig:simplex}(c)]. At fixed $S$, the coordinates lie on a two-dimensional scaled simplex; for $S>0$, $S$ is the radial coordinate and $p_i/S$ gives the normalized simplex direction. For uniform SOC magnitude,
		\[
			\Pi^{\rm eff}=F_s(\lambda)\Pi,
		\]
		so $S=F_s(\lambda)$ and decreasing $\lambda$ moves a strong-locking point radially toward the origin without changing its normalized direction or principal axes. For nonuniform SOC,
		\[
			\Pi^{\rm eff}_{\mu\nu}
			=\langle F_s(\lambda_{\mathbf k})
		\hat n_\mu\hat n_\nu\rangle_{\rm FS},
	\]
	as defined in Eq.~\eqref{eq:Pi-eff}; correlations between SOC magnitude and texture can change both the eigenvalues and principal axes, so the contraction need not be radial.

		Appendix~\ref{app:intermediate-support} also provides a numerical cross-check against the published BdG curves of Ref.~\cite{pang2026} and a separately identified heuristic extension to anisotropic-gap unitary triplets. The digitized $T=0$ endpoints agree with Eq.~\eqref{eq:Fs} within $1.5\times10^{-3}$ in $\chi/\chi_N$. We next turn from this zero-field interpolation to the qualitatively different finite-field problem.

	\subsection{Finite Zeeman field: limitations and numerical regime}
	\label{sec:finiteH}
	
	Section~\ref{sec:intermediate} relies on pairing between time-reversed zero-field helicity partners. A Zeeman field changes this structure fundamentally. Recalling Eq.~\eqref{eq:ndef}, and denoting the finite-field normal-state unit texture by $\hat{\mathbf n}^{(b)}_{\mathbf k}$, we write
	\begin{equation}
		\mathbf b_{\mathbf k}=\mu_B\mathbf H+\mathbf g_{\mathbf k},
		\qquad
		\hat{\mathbf n}^{(b)}_{\mathbf k}
		=\mathbf b_{\mathbf k}/|\mathbf b_{\mathbf k}|.
		\label{eq:bvec}
	\end{equation}
	Because $\mathbf g_{-\mathbf k}=-\mathbf g_{\mathbf k}$,
	\begin{equation}
		\mathbf b_{-\mathbf k}=\mu_B\mathbf H-\mathbf g_{\mathbf k}
		\ne-\mathbf b_{\mathbf k}
		\quad(\mathbf H\ne0),
		\label{eq:asymmetry}
	\end{equation}
	in general. Thus, at nonzero field, the spin axes at $\mathbf k$ and $-\mathbf k$ are not generically antiparallel, while
	\[
		|\mathbf b_{\mathbf k}|^2-|\mathbf b_{-\mathbf k}|^2
		=4\mu_B\mathbf H\cdot\mathbf g_{\mathbf k}
	\]
	shows that their normal-state splittings are not generically equal. Consequently, the condition $|\mathbf b_{\mathbf k}|\gg\Delta$ alone does not yield the two independent $2\times2$ helicity blocks used at zero field, and the substitution $\hat{\mathbf g}_{\mathbf k}\mapsto\hat{\mathbf n}^{(b)}_{\mathbf k}$ in Eq.~\eqref{eq:unified} is unjustified.
	
	The finite-field texture tensor remains well defined as the normal-state descriptor
	\begin{equation}
		\Pi^{(b)}_{\mu\nu}(\mathbf H)
		\equiv
		\left\langle
		\hat n^{(b)}_{\mu}(\mathbf k;\mathbf H)
		\hat n^{(b)}_{\nu}(\mathbf k;\mathbf H)
		\right\rangle_{\rm FS},
		\label{eq:Pi-H}
	\end{equation}
		which equals $\Pi_{\mu\nu}(\mathbf H,\mathbf g)$ of Eq.~\eqref{eq:Pidef}. It is only a normal-state geometric descriptor and does not determine the superconducting susceptibility. Appendix~\ref{app:finite-field-response} derives its weak-field expansion and discusses the model-specific cancellation and Bogoliubov-Fermi-surface scales.
	
		Within mean-field BdG theory at a generic field, the order parameter must be determined self-consistently and the full $4\times4$ BdG Hamiltonian diagonalized. A Kubo calculation about a finite-field background gives a differential response, whereas the Knight shift measures the equilibrium secant response proportional to $\hat{\mathbf H}\cdot\mathbf M(T,\mathbf H)/H$; they coincide only in linear response. Appendix~\ref{app:finite-field-response} gives the precise definitions and order-parameter-relaxation term, and Ref.~\cite{pang2026} provides explicit finite-field calculations. Protocol~C therefore compares measured field sweeps with self-consistent calculations of the same finite-field Knight-shift observable rather than with $\Pi^{(b)}(\mathbf H)$.
	
	\section[Multiband Extension]{Multiband extension: decoupled-pocket baseline}
	\label{sec:multiband}
		
		We next extend the analysis to band and orbital structure. We first derive the additive disjoint-pocket form, then state its validity domain and apply it to a conditional A$_2$Cr$_3$As$_3$ baseline.
	
	\subsection{Disjoint-pocket form}
	
	Let $i$ label disjoint Fermi-surface pockets and define $\chi_N=\sum_i\chi_N^{(i)}$. If each pocket independently satisfies the zero-field, fully gapped, helicity-diagonal strong-SOC assumptions, additivity gives
	\begin{equation}
		\chi_{\mu\nu}(0)
		=\sum_i\chi_N^{(i)}
		[\delta_{\mu\nu}-\Pi_{\mu\nu}^{(i)}]+\chi_No(1),
		\label{eq:multiband-id}
	\end{equation}
	where the aggregate remainder is understood componentwise. In the exact zero-SOC unitary-triplet limit, the same form holds with $\Pi^{(i)}=\Pi^{(d,i)}$, and the remainder vanishes identically. Hence
	\begin{equation}
		\mathrm{Tr}\,\chi(0)
		=2\chi_N+\chi_No(1).
		\label{eq:multiband-trace}
	\end{equation}
	This is a controlled decoupled-pocket identity, not a general multiband inequality.
	
	\subsection{Scope and failure modes of the pocket baseline}
		
	Equation~\eqref{eq:multiband-id} applies only when pairing and the spin-response vertices are block diagonal in pocket space and each pocket satisfies a controlled response identity. Interband pairing, orbital hybridization, or interband spin-vertex matrix elements generally generate cross terms not represented by $\Pi^{(i)}$. The exact Bogoliubov sum rule remains valid in the full spin-orbital Hilbert space, but by itself implies neither Eq.~\eqref{eq:multiband-trace} nor a pocketwise simplex decomposition. A deviation therefore rejects the joint baseline rather than uniquely diagnosing interband pairing; interpretation requires a dedicated multiorbital response calculation.
	
	\subsection{Conditional baseline for A$_{2}$Cr$_{3}$As$_{3}$}
	\label{sec:multiband-A2Cr3As3}
	
	The A$_{2}$Cr$_{3}$As$_{3}$ family ($A=$ Na, K, Rb, Cs) comprises quasi-one-dimensional superconductors~\cite{Bao15,Tang15_1,Tang15_2,Mu18,ZHOU2017208}. Density-functional calculations report three Fermi-surface sheets, conventionally labeled $\alpha,\beta,\gamma$, with quasi-1D $\alpha,\beta$ sheets and a more three-dimensional $\gamma$ sheet~\cite{Jiang15,Wu15}. Within the additive pocket baseline, we define the scalar normal-state susceptibility weights $w_i=\chi_N^{(i)}/\sum_j\chi_N^{(j)}$, so $\sum_iw_i=1$. These weights need not equal bare density-of-states fractions when pocket-dependent magnetic renormalizations or spin-vertex factors are present.
	We now define a phenomenological two-texture baseline, not a consequence of crystal symmetry or of the cited DFT calculations. Assume
	\[
		\Pi^{(\alpha)}\simeq\Pi^{(\beta)}
		\simeq\operatorname{diag}(0,0,1),
		\qquad
		\Pi^{(\gamma)}\simeq\mathbb I/3,
	\]
	with every pocket in the zero-field, fully gapped, helicity-diagonal strong-SOC regime of Eq.~\eqref{eq:multiband-id}. With $w_\parallel=w_\alpha+w_\beta=1-w_\gamma$, the susceptibility-weighted locking tensor is
	\[
		\Pi_{\rm eff}\equiv\sum_iw_i\Pi^{(i)}
		\simeq w_\parallel\operatorname{diag}(0,0,1)
		+\frac{w_\gamma}{3}\mathbb I,
	\]
	with $\operatorname{Tr}\Pi_{\rm eff}=1$, so
	\begin{equation}
		\frac{\chi_{\mu\nu}(0)}{\chi_N}
		=\delta_{\mu\nu}-(\Pi_{\rm eff})_{\mu\nu}+o(1).
		\label{eq:a2cr3as3-pred}
	\end{equation}
		Componentwise,
	\begin{align}
		\frac{\chi_{xx}(0)}{\chi_N}
		=\frac{\chi_{yy}(0)}{\chi_N}
		&=\frac23+\frac13w_\parallel+o(1),
		\label{eq:a2cr3as3-perp}\\
		\frac{\chi_{zz}(0)}{\chi_N}
		&=\frac23(1-w_\parallel)+o(1).
		\label{eq:a2cr3as3-par}
	\end{align}
	As $w_\parallel$ varies from $0$ to $1$, this conditional baseline traces the simplex line from the barycenter to the $\hat c$ vertex. A discrepancy rejects this specified joint baseline; it does not by itself rule out a general band-resolved SOC mechanism or uniquely establish interband pairing.

	\section[Core Diagnostic]{Core diagnostic: Knight-shift ellipsoid}
	\label{sec:ellipsoid}
	
		In either the strong-SOC regime of Eq.~\eqref{eq:strong-locking-scales} or the exact zero-SOC unitary-triplet limit, Eq.~\eqref{eq:unified} relates the electronic susceptibility to the locking tensor $\Pi$. Only after orbital subtraction, controlled hyperfine mapping, and verification of the controlled-locking trace identity can the measured spin Knight-shift tensor be mapped to $\Pi$.

	Figure~\ref{fig:simplex} previews the geometry developed in this section. The $\kappa_i$ are the normalized principal Knight-shift semi-axes. In the controlled-locking limits, $\kappa_i=1-\pi_i+o(1)$, where $\pi_i$ are the eigenvalues of $\Pi$. For intermediate-SOC $s$-wave pairing, $p_i$ are the eigenvalues of $\Pi^{\rm eff}$ and $S=\mathrm{Tr}\,\Pi^{\rm eff}=\tfrac12\sum_i\kappa_i$, so $\kappa_i=S-p_i$. The definitions, validity conditions, and hyperfine mapping are given below.
		
	\begin{figure*}[tb]
		\centering
			\begin{minipage}[c]{0.78\textwidth}
		\centering
		\begin{tikzpicture}[x=1.3cm,y=1.3cm,font=\small]
			
			\tikzset{
				refsphere/.style = {gray!55, dashed, line width=0.4pt},
				ellfill/.style   = {fill=blue!15, fill opacity=0.55},
				ellline/.style   = {draw=blue!55!black, line width=0.7pt},
				ellback/.style   = {draw=blue!55!black, line width=0.6pt, dashed},
				axisarr/.style   = {->, gray!75!black, line width=0.45pt, >=stealth},
				axislab/.style   = {font=\footnotesize, gray!45!black, inner sep=1pt},
				panellab/.style  = {font=\small},
				panelco/.style   = {font=\footnotesize}
			}
			
			\def\fore{0.32}
			\def\ax{0.45}
			
			\newcommand{\refSphere}{%
				\draw[refsphere] (0,0) circle (1);
				\draw[refsphere] (-1,0) arc (180:360:1 and \fore);
				\draw[refsphere] (-1,0) arc (180:0:1 and \fore);
			}
			
			\newcommand{\drawAxes}{%
				\draw[axisarr] (0,0) -- (\ax,0)
				node[axislab, right=0pt] {$\hat a$};
				\draw[axisarr] (0,0) -- (-\ax*0.65,-\ax*0.5)
				node[axislab, below left=-2pt] {$\hat b$};
				\draw[axisarr] (0,0) -- (0,\ax)
				node[axislab, above=0pt] {$\hat c$};
			}
			
			\newcommand{\drawEllipsoid}[3]{%
				\path[ellfill] (0,0) ellipse ({#1} and {#3});
				\draw[ellline] (0,0) ellipse ({#1} and {#3});
				\draw[ellback]  ({-#1},0) arc (180:360:{#1} and {\fore*#2});
				\draw[ellline]  ({-#1},0) arc (180:0:{#1} and {\fore*#2});
			}
			
			\newcommand{\drawDisc}[2]{%
				\path[ellfill] (0,0) ellipse ({#1} and {\fore*#2});
				\draw[ellline] (0,0) ellipse ({#1} and {\fore*#2});
			}
			
			\def\sep{2.6}
			
			\begin{scope}[shift={(0,0)}]
				\refSphere
				\drawEllipsoid{0.667}{0.667}{0.667}
				\drawAxes
				\node[panellab] at (0,-1.55) {(i) Sphere};
				\node[panelco] at (0,-1.95) {$(\tfrac{1}{3},\tfrac{1}{3},\tfrac{1}{3})$};
			\end{scope}
			
			\begin{scope}[shift={(\sep,0)}]
				\refSphere
				\drawDisc{1.0}{1.0}
				\drawAxes
				\node[panellab] at (0,-1.55) {(ii) Oblate};
				\node[panelco] at (0,-1.95) {$(0,0,1)$};
			\end{scope}
			
			\begin{scope}[shift={(2*\sep,0)}]
				\refSphere
				\drawEllipsoid{0.5}{0.5}{1.0}
				\drawAxes
				\node[panellab] at (0,-1.55) {(iii) Prolate};
				\node[panelco] at (0,-1.95) {$(\tfrac{1}{2},\tfrac{1}{2},0)$};
			\end{scope}
			
			\begin{scope}[shift={(3*\sep,0)}]
				\refSphere
				\drawEllipsoid{0.5}{0.7}{0.8}
				\drawAxes
				\node[panellab] at (0,-1.55) {(iv) Triaxial};
				\node[panelco] at (0,-1.95) {$(\tfrac{1}{2},\tfrac{3}{10},\tfrac{1}{5})$};
			\end{scope}
			
		\end{tikzpicture}
			\par\smallskip\textbf{(a) Canonical ellipsoids}
			\end{minipage}
			\par\medskip
			\begin{minipage}[t]{0.45\textwidth}
			\centering
			\begin{tikzpicture}[x=0.82cm,y=0.82cm,font=\small]
			\coordinate (V1) at (-2.8, 0);
			\coordinate (V2) at ( 2.8, 0);
			\coordinate (V3) at ( 0, 4.8497422);
			\coordinate (M12) at ( 0, 0);
			\coordinate (M13) at (-1.4, 2.4248711);
			\coordinate (M23) at ( 1.4, 2.4248711);
			\coordinate (B) at ( 0, 1.6165808);
			\coordinate (G) at (-0.56, 0.9699484);
			\fill[gray!8] (V1) -- (V2) -- (V3) -- cycle;
			\draw[line width=1pt] (V1) -- (V2) -- (V3) -- cycle;
			\draw[gray!55, dashed, thin] (V1) -- (M23);
			\draw[gray!55, dashed, thin] (V2) -- (M13);
			\draw[gray!55, dashed, thin] (V3) -- (M12);
			\fill (V1) circle (0.055);
			\fill (V2) circle (0.055);
			\fill (V3) circle (0.055);
			\fill[red!80!black] (M12) circle (0.055);
			\fill[red!80!black] (M13) circle (0.055);
			\fill[red!80!black] (M23) circle (0.055);
			\fill[blue!70!black] (B) circle (0.075);
			\fill[orange!80!black] ($(G)+(-0.065,-0.065)$) rectangle ($(G)+(0.065,0.065)$);
			\draw[green!30!black, line width=0.3pt]
			($(V3)+(0,0.13)$) --
			($(V3)+(0.13,0)$) --
			($(V3)+(0,-0.13)$) --
			($(V3)+(-0.13,0)$) -- cycle;
			\node[below left=2pt, font=\small] at (V1) {$\pi_{1}=1$};
			\node[below right=2pt, font=\small] at (V2) {$\pi_{2}=1$};
			\node[above=3pt, font=\small] at (V3) {$\pi_{3}=1$};
			\node[above=13pt, font=\small, green!35!black] at (V3) {K$_{2}$Cr$_{3}$As$_{3}$ (cond.)};
			\node[right=3pt, font=\small, align=left, inner sep=1pt]
			at (B) {(i)\\barycenter};
			\node[below=3pt, font=\small] at (M12) {(iii) edge midpoint};
			\node (iilab) [font=\small] at (-1.15, 4.30) {(ii) vertex};
			\draw[gray!60!black, line width=0.3pt, shorten <=1.5pt, shorten >=2.5pt]
			(iilab.east) -- (V3);
			\node (ivlab) [font=\small, orange!70!black] at (-0.56, 0.44) {(iv)};
			\draw[orange!60!black, line width=0.3pt, shorten <=1.5pt, shorten >=2.5pt]
			(ivlab.north) -- (G);
			\end{tikzpicture}
				\par\smallskip\textbf{(b) Simplex map}
				\end{minipage}
				\hfill
				\begin{minipage}[t]{0.52\textwidth}
			\centering
			\begin{tikzpicture}[font=\small]
				\coordinate (O) at (0,-1.15);
				\coordinate (P1) at (-2.2,0);
				\coordinate (P2) at (2.2,0);
				\coordinate (P3) at (0,2.8);
				\coordinate (B) at (0,0.933333);
				\coordinate (C) at (-0.85,0.82);

				\fill[gray!9] (P1) -- (P2) -- (P3) -- cycle;
				\draw[line width=0.9pt] (P1) -- (P2) -- (P3) -- cycle;
				\draw[gray!65, line width=0.7pt] (O) -- (P1);
				\draw[gray!65, line width=0.7pt] (O) -- (P2);
				\draw[gray!65, line width=0.7pt] (O) -- (P3);

				\coordinate (A1) at ($(O)!0.42!(P1)$);
				\coordinate (A2) at ($(O)!0.42!(P2)$);
				\coordinate (A3) at ($(O)!0.42!(P3)$);
				\coordinate (D1) at ($(O)!0.70!(P1)$);
				\coordinate (D2) at ($(O)!0.70!(P2)$);
				\coordinate (D3) at ($(O)!0.70!(P3)$);
				\fill[blue!10, opacity=0.55] (A1) -- (A2) -- (A3) -- cycle;
				\draw[blue!55!black, dashed, line width=0.6pt] (A1) -- (A2) -- (A3) -- cycle;
				\fill[blue!7, opacity=0.45] (D1) -- (D2) -- (D3) -- cycle;
				\draw[blue!55!black, dashed, line width=0.6pt] (D1) -- (D2) -- (D3) -- cycle;

				\draw[->, blue!70!black, line width=1.0pt] (B) -- (O);
				\draw[->, orange!80!black, densely dashed, line width=0.9pt]
				(C) .. controls (-1.05,0.25) and (-0.55,-0.35) .. (O);

				\fill (P1) circle (0.045);
				\fill (P2) circle (0.045);
				\fill (P3) circle (0.045);
				\fill (O) circle (0.045);
				\node[left=2pt] at (P1) {$p_1=1$};
				\node[right=2pt] at (P2) {$p_2=1$};
				\node[above=2pt] at (P3) {$p_3=1$};
				\node[below=2pt] at (O) {$p=(0,0,0)$, $S=0$};
				\node (nonuniformlabel) [orange!80!black, align=right] at (-2.45,1.18)
				{nonuniform\\SOC};
				\draw[orange!80!black, line width=0.4pt]
				(nonuniformlabel.east) -- (C);
				\node (stronglabel) at (1.55,1.75) {$S=1$};
				\draw[gray!70, line width=0.4pt]
				(stronglabel.west) -- (B);
				\node (uniformlabel) [blue!70!black, align=left] at (2.65,0.90)
				{uniform SOC\\(radial)};
				\draw[blue!70!black, line width=0.4pt]
				(uniformlabel.west) -- ($(B)!0.35!(O)$);
				\node (s70label) [blue!60!black] at (2.45,-0.42) {$S=0.70$};
				\draw[blue!60!black, line width=0.4pt]
				(s70label.west) -- (D2);
				\node (s42label) [blue!60!black] at (2.10,-0.92) {$S=0.42$};
				\draw[blue!60!black, line width=0.4pt]
				(s42label.west) -- (A2);
			\end{tikzpicture}
			\par\smallskip\textbf{(c) Intermediate-SOC tetrahedron}
			\end{minipage}
				\caption{Knight-shift ellipsoid and locking-coordinate maps at $T=0$. (a) Canonical shapes in the controlled locking limits, with the faint dashed spheres denoting the normalized normal-state references and the solid blue forms denoting the superconducting ellipsoids with leading-order semi-axes $\kappa_i(0)=1-\pi_i$: (i) the isotropic sphere at the barycenter, (ii) the oblate disc at a vertex, (iii) the prolate ellipsoid at an edge midpoint, and (iv) a generic triaxial ellipsoid in the interior. (b) The two-dimensional strong-locking simplex $\sum_i\pi_i=1$, $\pi_i\in[0,1]$; the numbered markers connect representative simplex points to their shapes in panel (a). Vertices describe one-axis locking, edge midpoints describe planar locking, the barycenter describes isotropic three-dimensional locking, and generic interior points describe lower-symmetry textures. Pairing labels require additional assumptions. The open green diamond indicates the conditional fixed-tensor zero-field placement of K$_{2}$Cr$_{3}$As$_{3}$ inferred from finite-field data, not a measured simplex coordinate; the orange square marks $(\tfrac{1}{2},\tfrac{3}{10},\tfrac{1}{5})$. (c) Three-dimensional intermediate-SOC space, where $p_i$ are the eigenvalues of $\Pi^{\rm eff}$ and $S\equiv\mathrm{Tr}\,\Pi^{\rm eff}=\langle F_s(\lambda_{\mathbf k})\rangle_{\rm FS}=\tfrac12\sum_i\kappa_i$ is the SOC-weighted radial response coordinate [Eqs.~\eqref{eq:Fs} and~\eqref{eq:Pi-eff}]. Thus $p_i\ge0$ and $\sum_i p_i=S\le1$. The outer face $S=1$ is the strong-locking simplex, dashed triangles are fixed-$S$ scaled simplices, and the origin is the zero-SOC collapse. Uniform SOC gives the radial trajectory $p_i=S\pi_i$; nonuniform SOC can follow a nonradial path. The corresponding Knight-shift semi-axes are $\kappa_i=S-p_i$.}
		\label{fig:simplex}
	\end{figure*}

	\subsection{Reference identities and ellipsoid definition}
	\label{sec:keyresults}
	
	Table~\ref{tab:keyresults} collects the central results, validity domains, and experimental uses.
		
	\begin{table*}[tbp]
		\caption{Central identities, validity domains, and uses. Only the Bogoliubov sum rule is kinematic; response statements require the listed assumptions.}
		\label{tab:keyresults}
		\small
		\renewcommand\arraystretch{1.4}
		\setlength{\tabcolsep}{1.5ex}
		\centering
			\begin{tabular*}{\textwidth}{@{\extracolsep{\fill}} c l p{0.28\textwidth} p{0.20\textwidth} p{0.17\textwidth} @{}}
			\hline\hline
			\# & Result & \rcol{Statement} & \rcol{Domain of validity} & \rcol{Used in}\\
			\hline
			1 & Bogoliubov sum rule & \rcol{$\sum_{s_{1}s_{2}}[W^{s_{1}s_{2}}_{ph,O}+W^{s_{1}s_{2}}_{pp,O}]=\mathrm{Tr}_{s}(O^{2})$} & \rcol{Hermitian $O$; any $\mathbf{k}$} & \rcol{Eq.~\eqref{eq:general-sum-rule}; basis for later rows}\\
			2 & Pointwise pp budget & \rcol{$\sum_{s_1s_2}W^{s_1s_2}_{pp,O}(\mathbf k)\le\mathrm{Tr}_s(O^2)$} & \rcol{Hermitian $O$; any $\mathbf k$} & \rcol{Eq.~\eqref{eq:p-pointwise}; no response bound alone}\\
			3 & Controlled residual traces & \rcol{$\mathrm{Tr}\,\chi(0)/\chi_N=2+o(1)$ (locking; exact at zero SOC); $\le2$ (intermediate-SOC $s$ wave)} & \rcol{Zero field; full gap; stated controlled limits} & \rcol{Eqs.~\eqref{eq:unified} and~\eqref{eq:chi-explicit}; trace test}\\
			4 & Site-resolved budget & \rcol{$\sum(W_{ph,O_{{\rm hf},\alpha}}+W_{pp,O_{{\rm hf},\alpha}})=2\sum_{\mu}(A^{(\mathbf{R})}_{\alpha\mu})^{2}$} & \rcol{Real hyperfine tensor; any $\mathbf{k}$} & \rcol{Eq.~\eqref{eq:site-budget}; local operator budget}\\
			5 & Unified locking identity & \rcol{$\chi_{\mu\nu}(0)/\chi_N=\delta_{\mu\nu}-\Pi_{\mu\nu}+o(1)$} & \rcol{Full gap; strong-SOC helicity diagonal or exact zero-SOC unitary triplet} & \rcol{Eq.~\eqref{eq:unified}; Protocol B}\\
			6 & Trace from ellipsoid & \rcol{$\sum_i\kappa_i(0)=2+o(1)$} & \rcol{Unified locking regime; scalar or axis-normalized coaxial hyperfine} & \rcol{Eq.~\eqref{eq:trace-from-ellipsoid}; Protocol B}\\
			7 & Intermediate-SOC, $s$ wave & \rcol{$\chi_{\mu\nu}(0)/\chi_N=\langle F_s(\lambda_{\mathbf k})[\delta_{\mu\nu}-\hat n_\mu\hat n_\nu]\rangle_{\rm FS}$} & \rcol{$\mathbf H=0$; $s$ wave; arbitrary $|\mathbf g|/\Delta$; $g_{\max}\ll E_F$} & \rcol{Eq.~\eqref{eq:chi-explicit}; Protocol B}\\
			8 & $F_s$ closed form & \rcol{$F_s(\lambda)=1-\operatorname{asinh}\lambda/[\lambda\sqrt{1+\lambda^2}]$} & \rcol{As in row 7} & \rcol{Eq.~\eqref{eq:Fs}; SOC inversion; $\lambda_{\rm eff}$ if nonuniform}\\
			9 & Finite-$\mathbf H$ response & \rcol{Self-consistent BdG generally required; $\Pi^{(b)}(\mathbf H)$ is a normal-state descriptor} & \rcol{Finite $\mathbf H$; $\mathbf b_{-\mathbf k}\ne-\mathbf b_{\mathbf k}$; stable superconducting state} & \rcol{Eqs.~\eqref{eq:asymmetry} and~\eqref{eq:Pi-H-weak}; Protocol C}\\
			10 & Kramers--Kronig subtraction & \rcol{$\tfrac{2}{\pi}\lim_{\mathbf q\to0}\int_0^\infty\!\frac{d\omega}{\omega}[\mathrm{Im}\chi_{\mu\mu}^{R,N}-\mathrm{Im}\chi_{\mu\mu}^{R,SC}]=\chi_{\mu\mu}^{N}(0)-\chi_{\mu\mu}^{SC}(0)$} & \rcol{Causal response; static $\mathbf q\to0$ limit} & \rcol{Eq.~\eqref{eq:spinFGT}; signed, not conserved}\\
			11 & \parbox[t]{0.24\textwidth}{\raggedright Vanishing of the $\mathbf q=0$ contribution to $1/T_1$\par} & \rcol{$\Pi^{(g)}_{\beta\beta}=\Pi^{(g)}_{\gamma\gamma}=0\Rightarrow(1/T_1T)_{\mathbf q=0,H\parallel\alpha}^{N,SC}=0$} & \rcol{Strong-SOC helicity locking; weak field; calibrated transverse channels} & \rcol{Eq.~\eqref{eq:T1-vanishing}; Protocol D; conditional test}\\
			12 & Decoupled-pocket identity & \rcol{$\chi_{\mu\nu}(0)=\sum_i\chi_N^{(i)}[\delta_{\mu\nu}-\Pi_{\mu\nu}^{(i)}]+\chi_No(1)$; $\mathrm{Tr}\,\chi(0)=2\chi_N+\chi_No(1)$} & \rcol{Decoupled full-gap pockets; invalidated by relevant interband pairing, hybridization, or spin vertices} & \rcol{Eqs.~\eqref{eq:multiband-id} and~\eqref{eq:multiband-trace}; Protocol F}\\
			\hline\hline
			\end{tabular*}
	\end{table*}
	
	For scalar hyperfine coupling $A^{(\mathbf{R})}_{\mu\nu}=A^{(\mathbf{R})}\delta_{\mu\nu}$ (the default setting of this subsection), the spin part of the Knight-shift tensor at a fixed site $\mathbf{R}$ is related in linear response to $\chi_{\mu\nu}(T)\equiv\left.\partial M_\mu/\partial H_\nu\right|_{\mathbf H=0}$ by~\cite{Slichter1990,tinkham,Triplet2021}
	\begin{equation}
		K^{\rm spin}_{\mu\nu}(T)=A^{(\mathbf{R})}\chi_{\mu\nu}(T).
		\label{eq:Kdef}
	\end{equation}
	For the remainder of this subsection, we suppress the site label and write $A\equiv A^{(\mathbf R)}$. We define $K_N^{\rm spin}\equiv A\chi_N$ and
	\begin{equation}
		\kappa(T)\equiv
		\frac{K^{\rm spin}(T)}{K_N^{\rm spin}}
		=\frac{\chi(T)}{\chi_N}.
		\label{eq:kappa-def}
	\end{equation}
	The diagonal anisotropic case is treated in Sec.~\ref{sec:ellipsoid-hf}.
	
	The \emph{Knight-shift ellipsoid} is the image of the unit ball under the normalized linear-response tensor $\kappa(T)$. It is related to the susceptibility indicatrix of Nye~\cite{Nye_book}, but its semi-axes are the response eigenvalues themselves.
	
	In the linear-response regime, rotation of a single crystal relative to a field direction $\hat{\mathbf{H}}$ measures
	\begin{equation}
		K^{\rm spin}(\hat{\mathbf{H}},T)
		=\hat{\mathbf H}\cdot K^{\rm spin}(T)\cdot\hat{\mathbf H}.
		\label{eq:Kscalar}
	\end{equation}
	At fields where the response depends appreciably on field magnitude or orientation, rotation instead traces a finite-field Knight-shift surface governed by $\mathbf M(\mathbf H)/H$; it need not admit an ellipsoid representation (Appendix~\ref{app:finite-field-response} and Protocol~C).
	The symmetric tensor $\kappa(T)$ has principal values $\kappa_1,\kappa_2,\kappa_3\ge0$ with orthonormal principal axes. Define the solid ellipsoid
	\begin{equation}
		\mathcal E_K(T)
		\equiv
		\{\mathbf x=\kappa(T)\mathbf u:\|\mathbf u\|\le1\}.
		\label{eq:ellipsoid}
	\end{equation}
	Its semi-axes are $\kappa_i(T)$, so $\kappa_i=0$ collapses the corresponding axis. Six independent directional measurements at fixed $T$ in the linear-response regime determine a general symmetric $\kappa(T)$.
	
	\subsection{Controlled-locking simplex and allowed shapes}
	
		Substituting the unified locking identity Eq.~\eqref{eq:unified} into Eqs.~\eqref{eq:Kdef}--\eqref{eq:Kscalar} gives
	\begin{equation}
		\boxed{\;
		\frac{K^{\rm spin}(\hat{\mathbf H},0)}{K_N^{\rm spin}}
		=1-\hat{\mathbf H}\cdot\Pi\cdot\hat{\mathbf H}+o(1).
		\;}
		\label{eq:Kellipsoid}
	\end{equation}
	The principal-axis frame of $\mathcal E_K(0)$ coincides with that of $\Pi$, and
	\begin{equation}
		\kappa_i(0)=1-\pi_i+o(1),\qquad i=1,2,3.
		\label{eq:Ki-pi}
	\end{equation}
	The three eigenvalues $\pi_{i}\in[0,1]$ of $\Pi$ satisfy $\pi_{1}+\pi_{2}+\pi_{3}=\mathrm{Tr}\,\Pi=1$, so the $T=0$ principal values obey
	\begin{equation}
		\boxed{\;\sum_i\kappa_i(0)=2+o(1).\;}
		\label{eq:trace-from-ellipsoid}
	\end{equation}
	The $o(1)$ remainder vanishes identically in the exact zero-SOC unitary-triplet limit.
		This simplex classification is a statement about the locking tensor, not by itself about the parity label. As Sec.~\ref{sec:paritymix} shows, a fully gapped helicity-diagonal parity-mixed state maps to the same $T=0$ ellipsoid as the corresponding pure locking texture; distinguishing parity admixture then requires finite-$T$ information or other probes.
	
		The constraints $\pi_{1}+\pi_{2}+\pi_{3}=1$ and $0\le\pi_{i}\le1$ confine the locking-tensor eigenvalues to a two-dimensional simplex (Fig.~\ref{fig:simplex}). Each point fixes an eigenvalue pattern and hence an ellipsoid shape, but it does not uniquely determine a pairing class or microscopic locking texture. Figure~\ref{fig:simplex} shows the canonical geometries, while Table~\ref{tab:diagnose} collects their experimental readout and principal alternatives.
	
	Equation~\eqref{eq:trace-from-ellipsoid} applies only in the controlled locking limits. The exact intermediate-SOC $s$-wave result instead gives $\sum_i\kappa_i(0)=2\langle F_s\rangle_{\rm FS}\le2$, while other pairings can lie on either side of $2$; a deviation therefore diagnoses failure of the controlled locking description, not a unique pairing class. Likewise, an ellipsoid shape does not identify pairing parity. Different nuclei must return the same normalized simplex point only for scalar hyperfine coupling or for coaxial diagonal tensors normalized axis by axis; a general non-coaxial hyperfine tensor must be calibrated independently, and the exact site budget Eq.~\eqref{eq:site-budget} does not remove this requirement.

		\subsection{Exact cubic consequences and UBe$_{13}$ illustration}
		\label{sec:ellipsoid-cubic}

		Within the controlled locking limits, the simplex construction applies without change to cubic crystals, but cubic crystal symmetry alone does not imply a generic interior point.

		\begin{corollary}[Cubic barycenter]
		\label{cor:cubic-barycenter}
		If the superconducting state and its spin-locking texture preserve the full cubic point group, then
		\begin{equation}
			R\Pi R^{T}=\Pi
		\qquad\text{for every cubic rotation }R.
		\label{eq:cubic-invariance}
	\end{equation}
	The only symmetric second-rank tensor satisfying Eq.~\eqref{eq:cubic-invariance} is proportional to the identity. Since $\mathrm{Tr}\,\Pi=1$,
	\begin{equation}
		\boxed{\begin{aligned}
			\Pi&=\tfrac{1}{3}\mathbb I,\\
			(\pi_1,\pi_2,\pi_3)
			&=(\tfrac13,\tfrac13,\tfrac13),\\
			\frac{\chi(0)}{\chi_N}
			&=\tfrac23\mathbb I+o(1),
		\end{aligned}}
			\label{eq:cubic-barycenter}
		\end{equation}
		so a cubic-symmetry-preserving locking texture maps to the barycenter, not to a generic interior point.
		\end{corollary}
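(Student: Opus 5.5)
The plan has three steps: show that cubic invariance of the texture forces Eq.~\eqref{eq:cubic-invariance}, use Schur's lemma to conclude $\Pi\propto\mathbb I$, and then feed this into the unified locking identity Eq.~\eqref{eq:unified}.

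\emph{Step 1: invariance.} Let $R$ be a cubic rotation. The hypothesis means that the Fermi-surface measure $\delta(\xi_{\mathbf k})$ and the density of states $N(0)$ are invariant under $\mathbf k\mapsto R\mathbf k$. It also means the texture is covariant, $\hat{\mathbf n}_{R\mathbf k}=R\hat{\mathbf n}_{\mathbf k}$. Spin is an axial vector, so improper elements act on it through $\det(R)R$. That sign cancels in the bilinear $\hat n_\mu\hat n_\nu$, so only the proper part matters. Changing the summation variable in Eq.~\eqref{eq:Pidef} then gives
\[
\Pi_{\mu\nu}=\langle \hat n_\mu(R\mathbf k)\hat n_\nu(R\mathbf k)\rangle_{\rm FS}=R_{\mu\alpha}R_{\nu\beta}\Pi_{\alpha\beta},
\]
which is Eq.~\eqref{eq:cubic-invariance}. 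I would state explicitly that covariance of both the superconducting state and the texture is needed. Symmetry of the crystal alone does not force the locked texture realized on the paired sheets to be covariant, and this is why the corollary is phrased as it is.

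\emph{Step 2: Schur's lemma.} The defining three-dimensional representation of the cubic rotation group $O$ is real and irreducible ($T_1$). By Schur's lemma, any matrix commuting with all $R\in O$ is a scalar multiple of $\mathbb I$. To keep this elementary, I would check it directly with two rotation types. The $C_2$ rotations about the Cartesian axes, e.g.\ $\operatorname{diag}(1,-1,-1)$, force every off-diagonal entry to vanish. The threefold rotation about $[111]$, which cyclically permutes $x\to y\to z$, forces the three diagonal entries to be equal. Together with $\mathrm{Tr}\,\Pi=1$, this gives $\Pi=\mathbb I/3$ and the barycentric eigenvalues.

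\emph{Step 3: susceptibility.} Under the zero-field controlled locking assumptions, substituting into Eq.~\eqref{eq:unified} yields $\chi(0)/\chi_N=\tfrac23\mathbb I+o(1)$. In the zero-SOC unitary-triplet case the result is exact, with $\Pi^{(d)}$ in place of $\Pi$; there the covariance of $\mathbf d$ up to a global phase suffices, because the phase cancels in $d_\mu d_\nu^*$.

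The only genuinely delicate point is Step 1. One must justify that ``the state preserves the cubic group'' implies covariance of $\hat{\mathbf n}_{\mathbf k}$ on the reference Fermi surface. In particular, a symmetry realized only up to a gauge or phase factor must leave the bilinear $\hat n_\mu\hat n_\nu$ unchanged, and I expect it to, since the texture is defined from the normal-state $\mathbf g_{\mathbf k}$. The algebra in Steps 2 and 3 is routine.
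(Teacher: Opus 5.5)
Your proposal is correct and follows the paper's route: cubic invariance gives $R\Pi R^{T}=\Pi$, irreducibility of the vector representation forces $\Pi\propto\mathbb I$, $\mathrm{Tr}\,\Pi=1$ fixes $\Pi=\mathbb I/3$, and Eq.~\eqref{eq:unified} then gives $\chi(0)/\chi_N=\tfrac23\mathbb I+o(1)$. Your change of variables in Eq.~\eqref{eq:Pidef} and your explicit $C_2$/$C_3$ check fill in steps the paper only asserts; since those elements lie in the tetrahedral rotation subgroup, the check covers $T$, $T_d$, and $O$ alike, and the texture covariance you flag as delicate is already part of the corollary's hypothesis rather than something you need to derive.
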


		This is the exact tensor consequence of full cubic invariance within the controlled locking description. A non-barycentric simplex point in a cubic material instead diagnoses symmetry lowering by the superconducting order or locking texture, selection of a particular domain by strain or field, or another anisotropic perturbation. Equal populations of symmetry-related domains can restore the barycenter after averaging.

		\begin{corollary}[Directional deficit along $\lbrack111\rbrack$]
		\label{cor:cubic-111}
		For the geometric directional deficit $q(\hat{\mathbf H})\equiv\hat{\mathbf H}\cdot\Pi\cdot\hat{\mathbf H}$, the $[111]$ direction gives
		\begin{equation}
		q_{[111]}
		=\tfrac13\!\left[
		\mathrm{Tr}\,\Pi
		+2(\Pi_{xy}+\Pi_{yz}+\Pi_{zx})
		\right].
		\label{eq:cubic-111-general}
		\end{equation}
		Thus $q_{[111]}=1/3$ follows from $\mathrm{Tr}\,\Pi=1$ only when $\Pi$ is diagonal in the cubic axes; off-diagonal elements contribute otherwise. Full cubic invariance is the stronger statement Eq.~\eqref{eq:cubic-barycenter}.
		\end{corollary}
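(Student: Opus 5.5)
The plan is to evaluate the quadratic form $q(\hat{\mathbf H})=\hat{\mathbf H}\cdot\Pi\cdot\hat{\mathbf H}$ directly at $\hat{\mathbf H}=(1,1,1)/\sqrt3$ in the cubic Cartesian frame. Only two facts about $\Pi$ from Eq.~\eqref{eq:Pidef} are needed: it is symmetric, $\Pi_{\mu\nu}=\Pi_{\nu\mu}$, and it has unit trace, $\mathrm{Tr}\,\Pi=1$. No response identity enters, since $q$ is a purely geometric functional of the locking tensor. Expanding gives
\[
q_{[111]}=\tfrac13\sum_{\mu\nu}\Pi_{\mu\nu}.
\]
Splitting the double sum into diagonal and off-diagonal terms, and pairing $\Pi_{\mu\nu}$ with $\Pi_{\nu\mu}$ by symmetry, yields
\[
q_{[111]}=\tfrac13\big[\mathrm{Tr}\,\Pi+2(\Pi_{xy}+\Pi_{yz}+\Pi_{zx})\big],
\]
which is Eq.~\eqref{eq:cubic-111-general}.

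For the second claim, I substitute $\mathrm{Tr}\,\Pi=1$ and note that $q_{[111]}=1/3$ holds exactly when the three off-diagonal elements sum to zero. This is guaranteed when $\Pi$ is diagonal in the cubic axes. To show that the off-diagonal elements genuinely matter, I will give an explicit witness: a uniaxial texture locked along $[111]$, i.e.\ $\hat{\mathbf n}_{\mathbf k}=\pm(1,1,1)/\sqrt3$. This texture has $\Pi=\hat{\mathbf e}\hat{\mathbf e}^{T}$ with all $\Pi_{\mu\nu}=1/3$, so $q_{[111]}=1\neq1/3$, even though all the diagonal entries equal $1/3$. The example also shows that measurements along the cubic axes alone cannot certify diagonality.

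Finally, I will relate the result to Corollary~\ref{cor:cubic-barycenter}. Under full cubic invariance $\Pi=\mathbb I/3$, so the off-diagonal elements vanish and $q(\hat{\mathbf H})=1/3$ for every direction, not only for $[111]$. This is why the barycenter statement is strictly stronger. No step poses a real obstacle. The only care needed is to keep the factor $2$ from the symmetric off-diagonal pairs, and to state the diagonality hypothesis in the cubic frame rather than in the principal frame of $\Pi$.
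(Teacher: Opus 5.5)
Your proposal is correct and takes the same route the paper implicitly relies on: expanding the quadratic form at $\hat{\mathbf H}=(1,1,1)/\sqrt3$ and using the symmetry and unit trace of $\Pi$ gives Eq.~\eqref{eq:cubic-111-general} directly. Your refinement is also accurate: $q_{[111]}=1/3$ holds precisely when $\Pi_{xy}+\Pi_{yz}+\Pi_{zx}=0$, so diagonality is sufficient rather than strictly necessary, and your $[111]$-locked witness with $q_{[111]}=1$ correctly illustrates the corollary's ``only when'' clause.
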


		As a conditional illustration, the measurements reported in Table~II of Ref.~\cite{Minami2026UBe13} do not violate Corollary~\ref{cor:cubic-barycenter}. They were performed in finite fields that may select a domain or mix order-parameter components, whereas the corollary requires full cubic invariance of the superconducting state and locking texture. Under the additional fixed-tensor, diagonal, single-domain axial ansatz, the data define a putative weakly prolate ellipsoid along $[001]$; they do not establish a zero-field tensor. Appendix~\ref{app:ube13-construction} gives the numerical reconstruction and finite-field qualifications.
		
	\subsection{Hyperfine mapping and intermediate-SOC contraction}
	\label{sec:ellipsoid-hf}
	
	For a diagonal hyperfine tensor $A^{(\mathbf{R})}=\mathrm{diag}(A_{1},A_{2},A_{3})$ coaxial with $\chi_{\mu\nu}$ and $\Pi_{\mu\nu}$,
	\begin{equation}
		K_{i}^{\rm spin}(0)=A_{i}\chi_{N}[1-\pi_{i}+o(1)],\qquad i=1,2,3,
		\label{eq:Khf-ellipsoid}
	\end{equation}
	where $K_i^{\rm spin}$ denotes the spin shift after orbital subtraction. Thus anisotropic hyperfine coupling rescales the principal spin shifts but preserves their normalized ratios. The controlled locking relations become
	\begin{align}
		\sum_{i}\frac{K_{i}^{\rm spin}(0)}{K_{i}^{\rm spin}(T_{c}^{+})} &= 2+o(1), \label{eq:Khf-trace}\\
		\frac{K_{i}^{\rm spin}(0)}{K_{i}^{\rm spin}(T_{c}^{+})} &= 1-\pi_{i}+o(1),\qquad 0\le1-\pi_i\le1. \label{eq:Khf-bound}
	\end{align}
	The remainders vanish identically in the exact zero-SOC unitary-triplet limit. Equation~\eqref{eq:Khf-bound} shows that each $A_{i}$ cancels axis by axis between numerator and denominator: under the coaxial assumption, the normalized principal-value ratios, and therefore the simplex point $(\pi_{1},\pi_{2},\pi_{3})$ extracted from them, are hyperfine-invariant up to the controlled corrections. No knowledge of the magnitudes $A_{i}$ is then required. This makes the geometric procedure transferable across nuclei and compounds; different NMR sites in the same material must return the same simplex point only when they also probe a common electronic locking kernel (Protocol~E).
		If the hyperfine and susceptibility principal frames are not coaxial, this cancellation fails and an independent determination of $A$ is required. We do not pursue that case here.
	
	In the intermediate-SOC $s$-wave case of Eq.~\eqref{eq:chi-explicit}, the scalar-hyperfine Knight shift reads
	\begin{equation}
	\begin{split}
	\frac{K^{\rm spin}(\hat{\mathbf{H}},0)}{K_{N}^{\rm spin}}&=\big\langle F_{s}(\lambda_{\mathbf{k}})\big[1-(\hat{\mathbf{H}}\cdot\hat{\mathbf n}_{\mathbf{k}})^{2}\big]\big\rangle_{\rm FS}\\
	&\equiv S-\hat{\mathbf{H}}\cdot\Pi^{\rm eff}\cdot\hat{\mathbf{H}},
	\end{split}	\label{eq:K-intermediate}
	\end{equation}
	with $S\equiv\langle F_{s}(\lambda_{\mathbf{k}})\rangle_{\rm FS}$ and
	\begin{equation}
		\Pi^{\rm eff}_{\mu\nu}\equiv\langle F_{s}(\lambda_{\mathbf{k}})\hat n_{\mu}\hat n_{\nu}\rangle_{\rm FS},\quad \mathrm{Tr}\,\Pi^{\rm eff}=S\le 1.
		\label{eq:Pi-eff}
	\end{equation}
	Let $p_i$ denote the eigenvalues of $\Pi^{\rm eff}$. Because $\kappa(0)=S\mathbb I-\Pi^{\rm eff}$, they satisfy $p_i=S-\kappa_i(0)$ and $\sum_i p_i=S\le1$.
		For uniform SOC magnitude, $\Pi^{\rm eff}=F_{s}(\lambda)\Pi$, so $p_{i}=F_{s}(\lambda)\pi_{i}$: the strong-SOC ellipsoid contracts homothetically toward the origin as the SOC weakens. In the full three-dimensional map, this changes $S$ while preserving the normalized simplex direction $p_i/S=\pi_i$.
	
	The next section turns these measurable coordinates into an ordered measurement program.
	
	\section{Practical manual for experimentalists}
	\label{sec:guide}
	\label{sec:protocols}
	\label{sec:guide-physics}
	
	The practical task is to reconstruct the orbital-subtracted spin Knight-shift tensor and determine which conclusions survive the required hyperfine, field, and controlled-response checks. Only after the trace condition has been verified do the normalized deficits define a locking geometry; they do not determine pairing parity or a unique order parameter. Tables~\ref{tab:diagnose} and~\ref{tab:protocols} provide the geometric dictionary and measurement checklist.
	
		The minimum workflow consists of the following steps: (i) measure the normal-state and low-temperature spin shifts along three orthogonal axes; (ii) test the trace with Protocol~A; (iii) if sufficient angular data are available, reconstruct the symmetric tensor with Protocol~B; and (iv) add finite-field, $1/T_1$, site-resolved, or pocket-model tests only when the corresponding assumptions are independently controlled. A powder average is insufficient unless isotropic normal-state response and scalar hyperfine coupling justify identifying it with one third of the trace.
	
	\subsection*{Interpretation and warning flags}
	\label{sec:guide-decision}
	\label{sec:guide-diagnose}
	\label{sec:guide-disambiguate}
	
			Use Table~\ref{tab:diagnose} only after orbital subtraction, hyperfine calibration, and the trace check. A simplex point fixes the geometry of the suppressed spin channel, not pairing parity; singlet, parity-mixed, and unitary-triplet states can share it. A trace mismatch rejects the strong-locking identification but does not diagnose non-unitarity. Disagreement between resolved sites rejects at least one of the common-kernel, orbital, or hyperfine assumptions, while failure of a finite-field model requires revising the candidate gap, texture, or parameters. A sheet-resolved interpretation requires the decoupled-pocket assumptions of Sec.~\ref{sec:multiband}; without them, an extremal aggregate point does not determine individual pocket tensors.

		For the trace diagnostic, define
		\[
			r_\alpha\equiv
			\frac{\chi_{\alpha\alpha}(0)}
			{\chi_{\alpha\alpha}(T_c^+)},
			\qquad
			R\equiv\frac{1}{3}\sum_\alpha r_\alpha.
		\]
		In the reference-Fermi-surface normal-state benchmark, $\chi_{\alpha\alpha}(T_c^+)=\chi_N+o(\chi_N)$, so $R=\mathrm{Tr}\,\chi(0)/(3\chi_N)+o(1)$. Equivalently, let $\kappa_i$ be the principal values of $\chi(0)/\chi_N$ and define $q_i\equiv1-\kappa_i$; then $R=\tfrac13\sum_i\kappa_i+o(1)$. If the calibrated electronic normal-state response is appreciably anisotropic, the unweighted mean of directional ratios is not this trace diagnostic.

		\begin{table*}[!t]
		\caption{Microscopic interpretations of measured locking geometries. For scalar hyperfine coupling, principal spin shifts are normalized to $K_N^{\rm spin}$; for a diagonal anisotropic tensor coaxial with $\chi$, normalize each axis by $K_i^{\rm spin}(T_c^+)$. Strong locking gives $\sum_i\pi_i=1$. Intermediate-SOC $s$-wave response gives $\sum_i p_i=S<1$ and, for scalar coupling, $K_i^{\rm spin}(0)/K_N^{\rm spin}=S-p_i$. The final row is a controlled-regime diagnostic, not a universal bound.}
		\label{tab:diagnose}
		\small
		\renewcommand\arraystretch{1.45}
		\setlength{\tabcolsep}{1.2ex}
			\begin{tabular*}{\textwidth}{@{\extracolsep{\fill}} p{0.20\textwidth} p{0.24\textwidth} p{0.22\textwidth} p{0.28\textwidth} @{}}
			\hline\hline
			\rcol{Coordinates/test} & \rcol{Geometry; normalized shifts} & \rcol{Representative realization} & \rcol{Alternatives/limits}\\
			\hline
			\rcol{$p=(0,0,0)$, $S=0$} & \rcol{Collapsed; $(0,0,0)$} & \rcol{Fully gapped singlet, negligible SOC} & \rcol{Reliable orbital subtraction required; residual quasiparticles or impurities prevent exact collapse}\\
			\rcol{$\pi=(\tfrac{1}{3},\tfrac{1}{3},\tfrac{1}{3})$} & \rcol{Sphere; $(\tfrac{2}{3},\tfrac{2}{3},\tfrac{2}{3})$} & \rcol{Strong-SOC singlet with $\Pi^{(g)}=\mathbb I/3$} & \rcol{Parity mixture with the same $\Pi^{(g)}$, or isotropic unitary triplet; cubic crystal symmetry alone is insufficient}\\
			\rcol{$\pi=(0,0,1)$} & \rcol{Oblate disc; $(1,1,0)$} & \rcol{Exact zero-SOC unitary OSP triplet with $\mathbf d\parallel\hat z$} & \rcol{Strong-SOC singlet or parity mixture with quasi-one-dimensional $\hat g\parallel\hat z$; non-unitary states are outside this baseline}\\
			\rcol{$\pi=(\tfrac{1}{2},\tfrac{1}{2},0)$} & \rcol{Prolate axial; $(\tfrac{1}{2},\tfrac{1}{2},1)$} & \rcol{Exact zero-SOC unitary ESP triplet with planar-isotropic $\Pi^{(d)}$} & \rcol{Strong-SOC singlet or parity mixture with ideal two-dimensional Rashba $\Pi^{(g)}$; generic planar textures need not lie at the midpoint}\\
			\rcol{Generic $\pi$, $\sum_i\pi_i=1$} & \rcol{Generally triaxial; $(1-\pi_1,1-\pi_2,1-\pi_3)$} & \rcol{Strong-SOC singlet with anisotropic $\Pi^{(g)}$} & \rcol{Parity mixture sharing $\Pi^{(g)}$; exact zero-SOC unitary triplet with $\Pi^{(d)}=\Pi$; multiband or lower-symmetry texture}\\
			\rcol{Generic $p$, $\sum_i p_i=S<1$} & \rcol{Contracted; $(S-p_1,S-p_2,S-p_3)$} & \rcol{Fully gapped intermediate-SOC $s$ wave with $\Pi^{\rm eff}$} & \rcol{Nonuniform SOC precludes single-$\lambda$ inversion; gap anisotropy or multiband weighting can mimic contraction}\\
				\rcol{$R>2/3\iff\sum_iq_i<1$} & \rcol{$\sum_i\kappa_i>2$} & \rcol{Outside both strong-locking and intermediate-SOC $s$-wave baselines} & \rcol{Can occur for unitary OSP pairing at intermediate SOC; recheck orbital/hyperfine calibration; not a non-unitarity test}\\
			\hline\hline
			\end{tabular*}
	\end{table*}
			
	\subsection*{Consistency checks for orbital-shift subtraction}
	\label{sec:guide-Korb}
	
	The protocols require the spin Knight-shift tensor, obtained from
	\begin{equation}
		K^{\rm meas}_{\mu\nu}(T)=K^{\rm orb}_{\mu\nu}+K^{\rm spin}_{\mu\nu}(T).
		\label{eq:Kdecomp}
	\end{equation}
	When relevant interband splittings $W$ greatly exceed $\Delta$ and no low-energy orbital reconstruction occurs, $K^{\rm orb}_{\mu\nu}$ is approximately temperature- and pairing-independent to leading order in $\Delta/W$. The standard Clogston--Jaccarino procedure~\cite{Clogston1964,Slichter1990,tinkham} extracts it as the intercept of $K^{\rm meas}(T)$ plotted against the corresponding bulk spin susceptibility above $T_c$. Impurity Curie tails, band-dependent hyperfine couplings, or mismatched local and bulk anisotropies can invalidate this extraction. Three internal checks follow.
	
	(i) \emph{Strong-locking trace consistency.} For scalar hyperfine coupling, a candidate strong-locking interpretation requires
	\begin{equation}
		\begin{aligned}
			\sum_\mu \big[K^{\rm meas}_{\mu\mu}(0)-K^{\rm orb}_{\mu\mu}\big]
			&=\tfrac{2}{3}\sum_\mu \big[K^{\rm meas}_{\mu\mu}(T_c^+)-K^{\rm orb}_{\mu\mu}\big]\\
			&\quad+o(K_N^{\rm spin}).
		\end{aligned}
		\label{eq:Korb-tracecheck}
	\end{equation}
	For a diagonal anisotropic hyperfine tensor coaxial with $\chi$, the corresponding test is instead the axis-normalized relation in Eq.~\eqref{eq:Khf-trace}; the raw-shift trace above does not apply. Failure indicates either an inconsistent orbital subtraction or breakdown of the strong-locking assumptions, but does not distinguish the two.
	
	(ii) \emph{Self-consistent $K_N^{\rm spin}$ from temperature differences.} For scalar hyperfine coupling, define $\Delta K_i=K_i^{\rm meas}(T_c^+)-K_i^{\rm meas}(0)$, which cancels $K_i^{\rm orb}$ exactly. The controlled strong-locking trace relation gives
	\begin{equation}
		K_N^{\rm spin}=\sum_i\Delta K_i+o(K_N^{\rm spin}).
		\label{eq:Korb-spinN}
	\end{equation}
	Thus the superconducting transition determines the scalar normal-state spin shift without a normal-state Korringa fit, and $K_i^{\rm orb}=K_i^{\rm meas}(T_c^+)-K_N^{\rm spin}$. The remainder vanishes in the exact zero-SOC unitary-triplet limit. For anisotropic hyperfine coupling, the three $K_i^{\rm spin}(T_c^+)$ are generally unequal, so temperature differences alone do not determine them without additional hyperfine calibration or symmetry constraints.
	
	(iii) \emph{Simplex constraint on the principal projections.} For scalar hyperfine coupling, or a diagonal anisotropic tensor coaxial with $\chi$, define
	\begin{equation}
		\tilde\pi_i(K^{\rm orb})\equiv 1-\frac{K^{\rm meas}_i(0)-K^{\rm orb}_i}{K^{\rm meas}_i(T_c^+)-K^{\rm orb}_i}
		\label{eq:Korb-pitilde}
	\end{equation}
	In the controlled strong-locking regime,
	\begin{equation}
		\sum_i\tilde\pi_i(K^{\rm orb})=1+o(1),
		\label{eq:Korb-simplex}
	\end{equation}
	with exact equality in the zero-SOC unitary-triplet limit. This supplies one scalar consistency condition on a candidate orbital tensor, but does not generally determine all its components. For scalar hyperfine coupling, Eq.~\eqref{eq:Korb-spinN} additionally determines $K_N^{\rm spin}$ and hence each $K_i^{\rm orb}$; for anisotropic hyperfine coupling, further calibration or symmetry constraints are required.
	
	Together, Eqs.~\eqref{eq:Korb-tracecheck}--\eqref{eq:Korb-simplex} make orbital subtraction a joint normal- and superconducting-state consistency problem. Equation~\eqref{eq:Korb-spinN} provides a scalar-coupling alternative to a contaminated normal-state fit; anisotropic coupling requires additional calibration or symmetry. Section~\ref{sec:K2Cr3As3} applies these checks to the Korringa-derived orbital shifts of Ref.~\cite{Triplet2021} only under an explicitly stated fixed-tensor zero-field extrapolation.
	
	\subsection*{Protocol checklist}
		
		Table~\ref{tab:protocols} summarizes the core Protocols~A--B and the conditional Protocols~C--F. Here $(a,b,c)$ denotes crystallographic axes corresponding to the generic $(x,y,z)$ of the theory sections.
	
	\begin{table*}[tb]
			\caption{Six experimental protocols. ``Three-axis $K$'' means orbital-subtracted spin Knight shifts along three orthogonal crystal axes; ``Rotation'' means sufficient symmetry-appropriate orientations to reconstruct $K_{\mu\nu}^{\rm spin}$.}
		\label{tab:protocols}
		\small
		\renewcommand\arraystretch{1.5}
		\setlength{\tabcolsep}{1.0ex}
		\centering
			\begin{tabular*}{\textwidth}{@{\extracolsep{\fill}} p{0.13\textwidth} p{0.16\textwidth} p{0.20\textwidth} p{0.20\textwidth} p{0.21\textwidth} @{}}
			\hline\hline
			\rcol{Protocol} & \rcol{Measurement} & \rcol{Output/test} & \rcol{Required conditions} & \rcol{Outcome/limit}\\
			\hline
				\rcol{A: Trace diagnostic} & \rcol{Three-axis $K^{\rm spin}(T)$} & \rcol{Test $R=\tfrac{1}{3}\sum r_{\alpha}$ against $2/3$} & \rcol{Calibrated hyperfine; isotropic reference-FS normal response; controlled-response regime} & \rcol{$R\ne2/3$ rejects strong locking; $R>2/3$ also rejects the intermediate-SOC $s$-wave baseline}\\
			\rcol{B: Ellipsoid} & \rcol{Rotation at $T_c^+$ and $T\to0$} & \rcol{$q_i$; intermediate-SOC $p_i$} & \rcol{Scalar or calibrated coaxial hyperfine; strong locking to set $q_i=\pi_i$} & \rcol{Require $\sum_iq_i=1$ before simplex mapping}\\
			\rcol{C: Finite-$H$ response} & \rcol{Rotation vs.\ $H$ at $T\to0$} & \rcol{Angular shift surface $K^{\rm spin}(\hat{\mathbf H},0,H)$} & \rcol{Candidate-specific self-consistent $\mathbf M(\mathbf H)$} & \rcol{An ellipsoid exists only if one tensor fits the angular data}\\
			\rcol{D: $1/T_{1}$ limitation} & \rcol{$1/T_{1}(T,\hat{\mathbf{H}})$} & \rcol{Momentum-summed transverse response; conditional $\mathbf q=0$ exclusion} & \rcol{Identified $\Pi^{(g)}$; both transverse components zero; weak field; calibrated hyperfine} & \rcol{Cannot separate small nonzero $\mathbf q$ from finite wave vector}\\
			\rcol{E: Site-resolved} & \rcol{Site-resolved $K^{\rm spin}$} & \rcol{Compare $(\pi_i)^{(\mathbf{R})}$ across sites} & \rcol{Common electronic kernel; scalar or calibrated coaxial hyperfine; strong locking} & \rcol{Disagreement rejects the joint assumptions without identifying the cause}\\
			\rcol{F: Baseline test} & \rcol{$K^{\rm spin}$ vs.\ pressure/composition} & \rcol{$w_{\parallel}$ and full-tensor trajectory} & \rcol{Specified strong-locking decoupled-pocket baseline} & \rcol{Departure from Eq.~\eqref{eq:a2cr3as3-pred} rejects only that baseline}\\
			\hline\hline
			\end{tabular*}
	\end{table*}
	
	\subsection{Protocol A: Trace diagnostic from three-axis spin shifts}
	\label{sec:protocolA}
	
			Protocol~A is the minimal-data trace test within the isotropic reference-Fermi-surface normal-state benchmark. It distinguishes controlled trace regimes but neither supplies the full locking geometry nor defines a universal susceptibility bound.
	
		\begin{enumerate}
			\item Measure $K_{\alpha}^{\rm meas}$ for $\alpha=a,b,c$ at $T_c^+$ and $T\to0$; subtract $K_{\alpha}^{\rm orb}$ and apply the calibrated hyperfine tensor to obtain $\chi_{\alpha\alpha}$.
				\item Verify that the calibrated normal-state response is consistent with $\chi_{\alpha\alpha}(T_c^+)=\chi_N+o(\chi_N)$ for a common $\chi_N$, then form $r_{\alpha}\equiv\chi_{\alpha\alpha}(0)/\chi_{\alpha\alpha}(T_{c}^{+})$. If appreciable electronic normal-state anisotropy remains, the unweighted mean below is not the controlled trace diagnostic without a model-specific normalization.
			\item Compute $R=\tfrac{1}{3}\sum_{\alpha}r_{\alpha}$.
			\item Compare $R$ with $2/3$. Strong locking requires $R=2/3+o(1)$; the intermediate-SOC $s$-wave model permits $R\le2/3$. Thus $R>2/3$ rejects both baselines but not unitary pairing.
			\item Use $(r_a,r_b,r_c)$ only to compare the controlled benchmarks in Table~\ref{tab:sat}, not to assign a unique pairing class.
		\end{enumerate}
	
	\subsection{Protocol B: Locking-ellipsoid reconstruction}
	\label{sec:protocolB}
	
			Protocol~B reconstructs the locking ellipsoid from sufficient orientation data appropriate to the crystal symmetry. Its geometric output provides the input to Protocols~C--F.
	
	\begin{enumerate}
		\item After orbital subtraction, reconstruct $K_{\mu\nu}^{\rm spin}(T_c^+)$ from symmetry-appropriate orientations in the linear-response regime.
		\item Repeat under the same field conditions at $T\to0$ to obtain $K_{\mu\nu}^{\rm spin}(0)$.
		\item Extract the principal axes and normalized responses $\kappa_i$: use $\kappa_i=K_i^{\rm spin}(0)/K_N^{\rm spin}$ for scalar hyperfine coupling, or $\kappa_i=K_i^{\rm spin}(0)/K_i^{\rm spin}(T_c^+)$ for a diagonal hyperfine tensor coaxial with $\chi$.
			\item Define $q_i\equiv1-\kappa_i$. Only if $q_i\ge0$ and $\sum_iq_i$ is consistent with $1$ within the experimental uncertainty and controlled corrections may one identify $q_i=\pi_i$ and place $(\pi_1,\pi_2,\pi_3)$ on the simplex of Fig.~\ref{fig:simplex}.
			\item For intermediate-SOC $s$-wave data, define $S=\tfrac{1}{2}\sum_i\kappa_i$ and $p_i=S-\kappa_i$. Require $S\le1$ and $p_i\ge0$; the relation $\sum_i p_i=S$ then follows algebraically. Place $(p_1,p_2,p_3)$ in the tetrahedral map of Fig.~\ref{fig:simplex}(c). For $S>0$, record $S$ as the radial coordinate and plot $(p_1/S,p_2/S,p_3/S)$ as the normalized direction on the simplex. Uniform SOC preserves this direction as $S$ varies; nonuniform SOC need not. At $S=0$, the response is collapsed and no direction is defined.
		\item Interpret the result as a locking geometry only. Identifying a microscopic pairing state requires the additional tests of Protocols~C--F.
	\end{enumerate}
		The number of independent orientations required at each temperature is set by the number of symmetry-allowed components of $K_{\mu\nu}^{\rm spin}$: two for uniaxial, three for orthorhombic, four for monoclinic, and six for triclinic symmetry. For a uniaxial crystal, $\mathbf H\parallel\hat c$ and $\mathbf H\perp\hat c$ yield $(q_\perp,q_\perp,q_\parallel)$ without a full angular scan. Only under the controlled strong-locking identification must $2q_\perp+q_\parallel=1+o(1)$; this is not a general symmetry relation. The two orientations therefore both reconstruct the axial geometry and test strong locking. This counting assumes that one field-independent tensor describes all orientations; otherwise use Protocol~C.
	
	In the uniform-gap intermediate-SOC $s$-wave model, Protocol~B gives
	\[
		S=\frac{\mathrm{Tr}\,\chi(0)}{2\chi_N}
		=\langle F_s(\lambda_{\mathbf k})\rangle_{\rm FS},
		\qquad \lambda_{\mathbf k}=|\mathbf g_{\mathbf k}|/\Delta.
	\]
	For uniform $\lambda_{\mathbf k}$, monotonicity gives $\lambda=F_s^{-1}(S)$. Otherwise the inversion defines only $\lambda_{\rm eff}$ in Eq.~\eqref{eq:lambda-eff}; Eq.~\eqref{eq:Fs-variance} gives its narrow-distribution correction. Extracting the mean SOC-to-gap ratio requires independent information about the distribution of $|\mathbf g_{\mathbf k}|$, or dominance by one near-isotropic sheet.
	
	\subsection{Protocol C: Empirical finite-field response}
	\label{sec:protocolC}
	
		Protocol~C maps the finite-field spin-shift surface empirically. Because a finite Zeeman field can invalidate the zero-field helicity reduction and reorient the order parameter, do not infer a field-dependent locking tensor; Appendix~\ref{app:finite-field-response} defines the measured secant response and its relation to the differential Kubo response.
	
	\begin{enumerate}
		\item Measure $K^{\rm spin}(\hat{\mathbf H},T\to0,H)$ over sufficient orientations at each common field magnitude, with orbital and diamagnetic contributions treated consistently.
		\item Test whether one symmetric tensor $K_H^{\rm eff}$ reproduces the angular data through $K^{\rm spin}(\hat{\mathbf H},0,H)=\hat{\mathbf H}\cdot K_H^{\rm eff}\cdot\hat{\mathbf H}$. If it does, track the resulting field-conditioned effective ellipsoid. If it does not, retain the measured angular surface without assigning principal axes or simplex coordinates.
		\item For each candidate gap and SOC texture, solve the finite-field BdG problem self-consistently and calculate the equilibrium magnetization $\mathbf M(T,\mathbf H)$. Compare the measured shift with $\hat{\mathbf H}\cdot A\mathbf M/H$ at the actual experimental field vector. Use a Kubo differential susceptibility directly only in linear response or after relating it to $\mathbf M/H$.
		\item Compare the angular and field evolution, including any nonanalytic features, directly with the calculation. Do not equate finite-field deficits with eigenvalues of the normal-state tensor $\Pi^{(b)}(\mathbf H)$. The cancellation field $H^\ast$ in Eq.~\eqref{eq:Hstar} is only a normal-state scale; Eq.~\eqref{eq:BFS-onset} is one model-specific superconducting onset condition.
	\end{enumerate}
	
	\subsection{Protocol D: Momentum limitation of $1/T_{1}$}
	\label{sec:protocolD}
	
		Measure the transverse relaxation rate with $\mathbf H$ aligned along each principal axis $\hat e_\alpha$, and retain the full momentum-summed interpretation. Ordinary $1/T_1$ data do not decompose the rate into exact $\mathbf q=0$ and nonzero-$\mathbf q$ parts, and they cannot distinguish near-uniform fluctuations from a response peaked at a finite wave vector.
	
		Only when independent evidence identifies the tensor as $\Pi^{(g)}$, both transverse projections vanish, and the weak-field and calibrated-hyperfine conditions hold may Eq.~\eqref{eq:T1-vanishing} be used to exclude the exact $\mathbf q=0$ term. The remaining rate can still be dominated by arbitrarily small nonzero $\mathbf q$. Appendix~\ref{app:dynamical-response} gives the derivation, Kramers--Kronig distinction, and Hebel--Slichter qualification.
	
	\subsection{Protocol E: Consistency across resolved NMR sites}
	\label{sec:protocolE}
	
		Protocol~E tests whether different resolved NMR sites probe a common electronic locking kernel. The normalized ratios of Sec.~\ref{sec:ellipsoid-hf} can be compared directly only for scalar or diagonal coaxial hyperfine coupling; otherwise the tensor must be calibrated independently. Equation~\eqref{eq:site-budget} gives an exact site-resolved operator budget but does not guarantee a site-independent normalized response.
	
	\begin{enumerate}
		\item Identify all resolved NMR-active sites for which $K^{\rm spin}$ can be extracted with adequate signal-to-noise and a reliable orbital subtraction.
			\item Determine the hyperfine tensor at each site. Apply the normalized Protocol~B construction directly only for scalar coupling or a diagonal tensor coaxial with $\chi$ and $\Pi$; otherwise calibrate the full tensor and reconstruct the electronic response before comparing sites. At each site, identify $(\pi_i)^{(\mathbf R)}=(q_i)^{(\mathbf R)}$ only after verifying $(q_i)^{(\mathbf R)}\ge0$ and $\sum_i(q_i)^{(\mathbf R)}=1+o(1)$.
			\item Compare $(\pi_i)^{(\mathbf R)}$ across sites, with the principal axes assigned consistently in the crystallographic basis and with statistical, orbital-subtraction, and hyperfine-calibration uncertainties propagated. Agreement within the combined uncertainty supports the common-kernel and calibration assumptions. Treat approximately $5\%$ componentwise precision as an experimental target, not a theoretical threshold.
				\item A significant discrepancy shows that at least one joint assumption has failed. Possible causes include interband structure or non-helicity-diagonal pairing~\cite{Frigeri2004,bauer2012non,Samokhin2007}, non-unitary response, site-specific orbital shifts, or incomplete hyperfine calibration. The magnitude of the discrepancy alone cannot distinguish these causes; resolving them requires site-specific calibration and dedicated multiorbital response modeling.
	\end{enumerate}
	
	\subsection{Protocol F: Decoupled-pocket baseline test}
	\label{sec:protocolF}
	
			Protocol~F is the final, model-dependent layer. After Protocols~A and~B verify the trace condition and reconstruct the locking geometry, test the specified decoupled-pocket model of Sec.~\ref{sec:multiband}; for A$_2$Cr$_3$As$_3$, it reduces to Eqs.~\eqref{eq:a2cr3as3-perp}--\eqref{eq:a2cr3as3-par}.
	
		\begin{enumerate}
			\item Within this baseline, infer $w_{\parallel}=[\chi_{xx}(0)-\chi_{zz}(0)]/\chi_N+o(1)$ from the calibrated output of Protocol~A or B, with $x$ a basal-plane principal axis. This is a model-derived normal-state susceptibility weight and need not equal a bare density-of-states fraction.
			\item Check the trace relation $\sum_\mu\chi_{\mu\mu}(0)=[2+o(1)]\chi_N$. The decoupled-pocket baseline predicts this relation when pairing and spin-response vertices are block diagonal and every relevant pocket lies in the controlled strong-locking regime. Failure rejects the joint baseline; agreement is necessary but not sufficient to establish its pocketwise assumptions.
			\item Under pressure or composition tuning, repeat the calibrated tensor measurement and infer $w_{\parallel}$ at each point using Step~1. Test whether $\chi_{zz}(0)/\chi_N$ follows Eq.~\eqref{eq:a2cr3as3-par} only over the range where the superconducting phase, assigned pocket textures, and pocketwise strong-locking conditions remain applicable.
			\item At each tuning point, compare the full measured tensor with Eq.~\eqref{eq:a2cr3as3-pred} using the inferred $w_{\parallel}$. Agreement within the propagated uncertainty supports, but does not establish, the specified weights and SOC textures within the decoupled-pocket baseline. Systematic disagreement rejects only that joint baseline; it neither selects a common $\mathbf d$ vector nor rules out all band-resolved SOC mechanisms.
		\end{enumerate}
		If the block-diagonal pocket assumptions fail, the exact Bogoliubov sum rule remains valid in the full spin-orbital space but no pocketwise response identity follows; interpretation then requires a multiorbital calculation of the type discussed in Sec.~\ref{sec:multiband}.
	
	\section{Application to K$_{2}$Cr$_{3}$As$_{3}$}
	\label{sec:K2Cr3As3}
	
	We apply the protocols of Sec.~\ref{sec:protocols} to the finite-field $^{75}$As Knight-shift and $1/T_1$ data of Yang \emph{et al.}~\cite{Triplet2021}. We separate the empirical anisotropic suppression from the additional fixed-tensor and zero-field assumptions needed for a simplex interpretation, then test a specified decoupled-pocket SOC-texture baseline and discuss microscopic scenarios and follow-up measurements.
	
	\subsection{Finite-field Knight-shift constraints and conditional $1/T_1$ test}
	\label{sec:K2Cr3As3-robust}
	
	Reference~\cite{Triplet2021} reports the spin Knight shift at the As2 site using a Korringa-based orbital subtraction, with $K_{\rm orb}^{c}=0.27\%$ and $K_{\rm orb}^{\perp c}=0.09\%$. The measurements were not performed in the zero-field linear-response limit. For $\mathbf H\perp\hat c$, Fig.~5(a) uses $H=8.4949$~T. For $\mathbf H\parallel\hat c$, Fig.~5(b) uses $H=8.4925$, $9.6955$, $11.9990$, and $16.0454$~T, while Fig.~6 compares the $11.9990$~T $c$-axis spectrum with the $8.4949$~T in-plane spectrum. Define the empirical field-conditioned deficit
	\begin{equation}
		q_{\alpha}(T,H)\equiv
		1-\frac{K_{\alpha}^{\rm spin}(T,H)}
		{K_{\alpha}^{\rm spin}[T_c^{+}(H),H]}.
		\label{eq:K2Cr3As3-raw}
	\end{equation}
	The data give $q_{\perp c}\simeq0$ down to the lowest measured temperature at $8.4949$~T, whereas $q_c$ becomes positive below a field-dependent onset $T^\ast(H)$ for the lower $c$-axis fields. No $c$-axis reduction is resolved at $16.0454$~T. Reference~\cite{Triplet2021} extrapolates the lower-field $c$-axis spin shift toward zero as $T\to0$ and attributes the field dependence to order-parameter depinning, with a pinning scale no larger than about $13$~T. The sixfold in-plane symmetry supports an axially symmetric in-plane response but does not establish one field-independent tensor over polar angles.
	
		(O1) \emph{Empirical finite-field observation.} After the adopted orbital subtraction and calibrated hyperfine mapping, the data show no resolved in-plane suppression but do show a field-dependent $c$-axis suppression. They do not by themselves determine the zero-field locking tensor $\Pi$. If one additionally assumes that the authors' $T\to0$ extrapolation applies near the common field $H\simeq8.5$~T, that a single axial secant-response tensor describes both orientations, and that field-induced order-parameter relaxation is negligible in this limit, then
	\begin{equation}
		(q_a,q_b,q_c)_{H\simeq8.5\,{\rm T}}^{\rm cond}
		=(0\pm\epsilon_{ab},\,0\pm\epsilon_{ab},\,1\pm\epsilon_c).
		\label{eq:K2Cr3As3-pi}
	\end{equation}
	Here $\epsilon_{ab}$ and $\epsilon_c$ include the experimental and extrapolation uncertainties. Only with the further approximation $\bar\chi(T;\mathbf H)\simeq\chi^{\rm diff}(T;\mathbf0)$ and the controlled zero-field strong-locking identity may one identify the conditional deficits in Eq.~\eqref{eq:K2Cr3As3-pi} with $\pi_i$ and place the inferred point near the $(0,0,1)$ vertex of Fig.~\ref{fig:simplex}. The sum $\sum_iq_i\simeq1$ is then a consistency check on this extrapolation, not a measured zero-field trace identity. A sheet-resolved inference additionally requires an additive decoupled-pocket decomposition; without it, the aggregate finite-field anisotropy does not determine individual pocket tensors.
	
		(C1) \emph{Conditional selection-rule consequence.} Even the conditional $\hat c$-vertex placement does not by itself imply Eq.~\eqref{eq:T1-vanishing}, because the same zero-field point can represent either $\Pi^{(g)}$ or the zero-SOC unitary-triplet tensor $\Pi^{(d)}$. If independent evidence identifies the extrapolated tensor as $\Pi^{(g)}$, and a finite-field calculation or separate scale argument shows that the actual measurement field preserves the antiparallel helicity structure, the order parameter, and the calibrated transverse hyperfine channels, then
	\begin{equation}
		\begin{gathered}
			\Pi^{(g)}_{aa}=\Pi^{(g)}_{bb}=0\\[-2pt]
			\Longrightarrow\;
			(1/T_{1}T)_{\mathbf{q}=0,\hat c}^{N}
			=(1/T_{1}T)_{\mathbf{q}=0,\hat c}^{\rm SC}=0
		\end{gathered}
		\label{eq:K2Cr3As3-T1vanish}
	\end{equation}
	Equation~\eqref{eq:K2Cr3As3-T1vanish} is therefore a candidate-specific finite-field test, not a deduction from the reported $8$--$12$~T Knight shifts. If its assumptions hold, it excludes only the exact-$\mathbf q=0$ contribution; the remaining $1/T_1$ can still be dominated by arbitrarily small nonzero $\mathbf q$. A zero-SOC unitary-triplet realization $\Pi^{(d)}$ yields no such exclusion from the simplex point alone. The normal-state Curie--Weiss enhancement is consistent with ferromagnetic or small-$\mathbf q$ fluctuations, but $1/T_1$ determines neither their momentum profile nor their role in pairing. The absence of a Hebel--Slichter peak is an independent gap-coherence observation, not unique evidence for nodal $p$-wave pairing.
	
	\subsection{Conditional test of the specified decoupled-pocket SOC baseline}
	\label{sec:K2Cr3As3-rulesout}
	
	The conditional zero-field vertex placement in O1, not the finite-field data alone, constrains an additive multiband interpretation. The specified decoupled-pocket SOC baseline of Sec.~\ref{sec:multiband-A2Cr3As3} assigns quasi-one-dimensional $\hat c$-axis locking to the $\alpha,\beta$ pockets and an isotropic three-dimensional locking texture, $\Pi^{(\gamma)}\simeq\mathbb I/3$, to the $\gamma$ pocket. It therefore predicts
	\begin{equation}
		\begin{split}
			(\pi_{a},\pi_{b},\pi_{c})^{\rm SOC\text{-}texture}
			={}& \Big(\tfrac{w_{\gamma}}{3},\tfrac{w_{\gamma}}{3},\,w_{\parallel}+\tfrac{w_{\gamma}}{3}\Big) \\
			\approx{}& (0.25,\,0.25,\,0.50),
		\end{split}
		\label{eq:K2Cr3As3-pred-SOC}
	\end{equation}
	Taking the DFT-derived $\gamma$-pocket density-of-states fraction $w_\gamma\approx0.75$ quoted in Ref.~\cite{Triplet2021} as a proxy for the normal-state susceptibility weight gives the numerical point in Eq.~\eqref{eq:K2Cr3As3-pred-SOC}. This axial point lies one quarter of the way from the barycenter to the $\hat c$-vertex of Fig.~\ref{fig:simplex}. Its largest-component discrepancy from the conditionally extrapolated vertex is $0.5$; even for $w_\gamma=0.5$, the discrepancy remains approximately $0.33$. Thus the specified baseline is disfavored under the fixed-tensor zero-field extrapolation, but the finite-field data alone do not reject it.
	
	This comparison jointly assumes a field-stable response, block-diagonal pocket additivity, the stated susceptibility weights, strong locking on every pocket, and the assigned SOC textures; order-parameter relaxation, interband pairing, orbital hybridization, or interband spin vertices can invalidate it. Within this restricted setting, a common $\hat c$-axis locking texture remains possible, with a common $\mathbf d$ vector as only one realization. The DFT calculations of Refs.~\cite{Jiang15,Wu15}, which do not find generic $\hat c$-axis alignment of the $\gamma$-pocket SOC texture, further disfavor an SOC-only realization of the specified baseline.
	
	\subsection{Microscopic scenarios compatible with conditional $\hat c$-axis locking}
	\label{sec:K2Cr3As3-open}
	
		The extrapolated $(0,0,1)$ vertex is compatible with multiple mechanisms under distinct assumptions. We label the following illustrative zero-field scenarios S1--S3, where S denotes scenario: S1 is a controlled zero-SOC unitary OSP realization; S2 is a conditional helicity-diagonal strong-SOC construction with $\mathbf d\parallel\mathbf g\parallel\hat c$; and S3 is a non-unitary triplet candidate outside the zero-SOC unitary-triplet baseline.
	
	(S1) \emph{Common-axis unitary OSP triplet, $\hat{\mathbf d}_{\mathbf k}\parallel\hat c$.} This is the simplest controlled zero-field realization within the present unitary BdG framework, not a controlled calculation of the $8$--$16$~T response. At zero SOC, the exact identity Eq.~\eqref{eq:strong-triplet} gives $\Pi^{(d)}=\mathrm{diag}(0,0,1)$ for a common $\hat c$-axis $\mathbf d$ texture on every band. The symmetry analysis of Ref.~\cite{Triplet2021} (Table~1 there) identifies the $E'$ irreducible representation $(p_x\pm ip_y)\hat z$ as the natural triplet candidate compatible with this scenario.
	
	(S2) \emph{Conditional helicity-diagonal parity admixture.} If every appreciably weighted sheet has $\mathbf g_{\mathbf k}\parallel\hat c$ and supports a fully gapped helicity-diagonal state with $\mathbf d_{\mathbf k}\parallel\mathbf g_{\mathbf k}$, Eq.~\eqref{eq:paritymix-T0} gives the same zero-field $(0,0,1)$ tensor as S1. Applying this construction at the experimental fields requires a self-consistent finite-field response calculation. The zero-field Knight-shift tensor fixes the locking geometry, not the singlet--triplet admixture. Two distinct finite-$T$ recovery scales can support split helicity gaps only when both helicity sheets carry appreciable response weight; even then, multiband gaps, gap anisotropy or nodes, and impurity broadening remain alternative explanations (Sec.~\ref{sec:paritymix-consequences}).
	
			(S3) \emph{Non-unitary triplet outside the zero-SOC unitary-triplet baseline.} A non-unitary state has $i\mathbf d_{\mathbf k}\times\mathbf d_{\mathbf k}^{*}\ne0$ and therefore requires at least two noncollinear spin components with a relative complex phase; a $\mathbf d$ vector confined to one real spin axis is unitary. The zero-SOC unitary-triplet identity therefore does not place such a state at the $(0,0,1)$ vertex. S3 remains an external candidate only if an explicit response calculation for a non-unitary and potentially multiband state reproduces the data. The Curie--Weiss enhancement reported in Ref.~\cite{Triplet2021} may motivate triplet physics, but it neither establishes non-unitarity nor makes the absence of a Hebel--Slichter peak diagnostic.
	
	Discriminating among these illustrative scenarios requires probes beyond the static Knight-shift tensor:
	\noindent(a) Time-reversal-symmetry-breaking probes such as zero-field $\mu$SR and polar Kerr rotation test a common consequence of non-unitary pairing; a positive signal would support broken time-reversal symmetry but would not uniquely identify S3.

	\noindent(b) Finite-$T$ Knight-shift recovery, interpreted together with independent constraints on multiband gaps, gap anisotropy or nodes, and impurity broadening, can constrain S1 versus the conditional S2 construction. A resolved two-scale recovery supports split helicity gaps only under the S2 conditions and is not unique evidence for parity admixture.

	\noindent(c) If interband pairing, orbital hybridization, or interband spin-vertex matrix elements are significant, a dedicated multiorbital BdG response calculation is required. The exact Bogoliubov sum rule, Eq.~\eqref{eq:general-sum-rule}, remains valid in the full spin-orbital Hilbert space, but the anisotropic response and pocket identities must be rederived.
	
	\subsection{Follow-up tests}
	\label{sec:K2Cr3As3-predictions}
	
		We label the following proposed follow-up tests P1 and P2. These two measurements would test the working assumptions and help discriminate among the remaining microscopic interpretations.
	
	(P1) \emph{Field and site consistency test.} Perform common-field angular scans and field sweeps at the As1 and As2 sites with site-specific orbital subtraction and hyperfine calibration. First test whether one secant-response tensor describes each common field, then extrapolate toward $H\to0$. Only after each site satisfies $(q_i)^{(\mathbf R)}\ge0$ and $\sum_i(q_i)^{(\mathbf R)}=1+o(1)$ in the controlled regime may the extracted $(\pi_i)^{(\mathbf R)}$ be compared. Treat approximately $5\%$ componentwise precision as the Protocol~E target, not a theoretical threshold. A significant discrepancy rejects at least one joint assumption, including field independence, orbital subtraction, hyperfine calibration, common-kernel description, or neglect of multiorbital and interband effects; its magnitude alone does not identify the cause.
	
	(P2) \emph{Pressure/composition test of competing baselines.} If the S1 common-$\mathbf d$ texture and its controlled zero-field response remain unchanged under tuning, the extrapolated $(0,0,1)$ tensor is independent of the pocket susceptibility weights. In the specified decoupled-pocket SOC baseline, changing $w_\gamma$ instead moves the zero-field tensor along Eq.~\eqref{eq:a2cr3as3-pred}. Repeat the calibrated measurement under pressure or composition tuning at common fields and with an $H\to0$ extrapolation, while checking that the superconducting phase, assigned textures, and hyperfine mapping remain controlled. The extrapolated trajectory can distinguish these two baselines, but it gives no universal prediction for arbitrary band-resolved SOC or interband pairing.
	
	The finite-field data establish anisotropic spin-shift suppression. The $\hat c$-vertex placement and tension with the specified decoupled-pocket baseline additionally require the fixed-tensor $T\to0$ and zero-field extrapolation of O1. P1 tests this field and site consistency, whereas P2 compares the common-$\mathbf d$ and specified decoupled-pocket SOC baselines under tuning. Neither the static tensor nor these follow-ups uniquely identify the pairing state; complementary probes and material-specific finite-field response calculations remain required.
	
	\section{Summary}
	\label{sec:summary}
	
		The exact foundation is the Bogoliubov sum rule: for any Hermitian single-particle operator $O$ and each momentum $\mathbf k$, the particle-hole and particle-particle squared-matrix-element weights sum to $\operatorname{Tr}_{s}(O^{2})$. This BdG-doubled Hilbert--Schmidt-norm identity generates spin, scalar-density, multiorbital, and site-resolved operator budgets. Because it contains neither excitation-energy denominators nor occupations, it supplies only a kinematic numerator budget; by itself, it provides neither a susceptibility bound nor a conservation law for dynamical spectral weight.

	The geometric response identity underlying the diagnostics holds in two controlled zero-field limits: the fully gapped, two-sheet helicity-diagonal strong-SOC regime and the fully gapped zero-SOC unitary-triplet regime:
	\begin{equation*}
		\chi_{\mu\nu}(0)=\chi_N[\delta_{\mu\nu}-\Pi_{\mu\nu}]+o(1).
	\end{equation*}
	Here $\Pi=\langle\hat g_\mu\hat g_\nu\rangle_{\rm FS}$ in the strong-SOC limit and $\Pi=\Pi^{(d)}$ in the zero-SOC unitary-triplet limit, where the $o(1)$ term vanishes identically. In both cases, $\Pi$ is positive semidefinite with $\operatorname{Tr}\Pi=1$ and is not generally a projector. After orbital subtraction, controlled hyperfine mapping, and verification of the trace constraint, its eigenvalues can be identified with the principal Knight-shift deficits. They define the simplex point and ellipsoid geometry, but do not uniquely identify the microscopic pairing mechanism.

	The two controlled locking limits give $\operatorname{Tr}\chi(0)=2\chi_N+o(1)$, with no remainder in the zero-SOC unitary-triplet limit. For zero-field pure $s$-wave pairing at arbitrary $\lambda_{\mathbf k}=|\mathbf g_{\mathbf k}|/\Delta$ within the reference-Fermi-surface regime, the exact interpolation instead gives
	$\operatorname{Tr}\chi(0)/\chi_N=2\langle F_s(\lambda_{\mathbf k})\rangle_{\rm FS}\le2$.
	Fully gapped helicity-diagonal parity mixtures share the corresponding leading $T=0$ tensor geometry, although distinct helicity gaps can produce multiple recovery scales. Full cubic invariance of both the superconducting state and locking texture fixes $\Pi=\mathbb I/3$ exactly and hence $\chi(0)/\chi_N=2\mathbb I/3+o(1)$; cubic crystal symmetry alone does not. In a finite Zeeman field, the zero-field helicity reduction generally fails. A self-consistent BdG calculation must determine the equilibrium magnetization and differential response, while the Knight shift probes the secant response $\mathbf M(\mathbf H)/H$; rotating-field data therefore define a general angular surface unless one tensor fits them.

	Causality yields a signed inverse-frequency Kramers--Kronig relation for the $\mathbf q\to0$ spectral difference, not a spin FGT conservation law, whereas $1/T_1$ samples transverse low-frequency response over all momenta. For an independently identified strong-SOC helicity texture with nonzero $g_{\min}$ and $\Pi^{(g)}_{\beta\beta}=\Pi^{(g)}_{\gamma\gamma}=0$, the exact-$\mathbf q=0$ contribution vanishes under the calibrated-hyperfine and weak-field conditions. This selection rule does not follow from $\Pi^{(d)}$; it neither excludes arbitrarily small nonzero momentum nor provides a model-independent relaxation estimate. Similarly, pocketwise identities require block-diagonal pairing and spin vertices and a controlled response identity on every pocket; interband pairing, hybridization, or interband spin vertices require a full multiorbital calculation. The six protocols enforce these validity checks.
	
	For K$_{2}$Cr$_{3}$As$_{3}$, O1 separates the measured field-dependent axial suppression at $8$--$16$~T from the conditional $(0,0,1)$ zero-field extrapolation. The latter requires a field-independent secant-response tensor, negligible order-parameter relaxation, the controlled strong-locking identity, and pocket additivity for sheet-resolved claims. C1 excludes only the exact-$\mathbf q=0$ contribution and only after independent identification of $\Pi^{(g)}$ and finite-field preservation of the helicity selection rule. Under the same extrapolation, the approximately $0.5$ largest-component mismatch disfavors, but does not experimentally exclude, the specified decoupled-pocket SOC baseline. S1 is a controlled zero-field unitary OSP realization, S2 is a conditional strong-SOC helicity construction, and S3 requires a non-unitary response theory. Common-field angular scans, $H\to0$ extrapolation, site-resolved NMR, and pressure or composition tuning test these assumptions.
	
		Overall, the framework places the Knight shift and $1/T_1$ in one BdG matrix-element setting while keeping the uniform static response, the $\mathbf q\to0$ inverse-frequency relation, and momentum-summed relaxation distinct. Its hierarchy consists of an exact operator budget, controlled response identities, and model-dependent interpretations. The sum rule remains exact for non-unitary and multiorbital BdG Hamiltonians; the anisotropic response and pocketwise identities must be rederived.

	\begin{acknowledgments}
	This work was supported by the National Key R\&D Program of China (Grant No. 2022YFA1403403) and the National Natural Science Foundation of China (Grant Nos. 12274441 and 12534004). The author thanks X.~Lin, Z.~Li, R.~Zhou, T.~Wu, and G.~Q.~Zheng for stimulating discussions and helpful comments.
	\end{acknowledgments}
	
	\section*{Data and Code Availability}
		For transparency, the numerical-validation code and records of the AI-assisted manuscript-preparation process are publicly available in the GitHub repository \texttt{Bogoliubov-Sum-Rule}~\cite{yizhou76-sudo2026bogoliubov}. Following the transparency practice exemplified in the essay \textit{``Co-Authoring with AI: How I Wrote a Physics Paper About AI, Using AI"} [\href{https://arxiv.org/abs/2604.04081}{arXiv:2604.04081}], the repository contains:
	\noindent(i) \textbf{Markdown transcripts:} Records of manuscript-preparation interactions with GPT-5.4, Claude Opus 4.7, and Codex.

	\noindent(ii) \textbf{Validation code:} The Python script for the numerical cross-check and the LaTeX source used to generate Fig.~\ref{fig:fig3-t0-validation}.
	
	\appendix

	\section{Notation}
	\label{app:notation}

	Table~\ref{tab:notation} summarizes the notation used throughout the paper.

		\begin{center}
		\begin{minipage}{\columnwidth}
			\captionof{table}{Key symbols.}
			\label{tab:notation}
			\small
		\renewcommand\arraystretch{1.5}
		\setlength{\tabcolsep}{0.5ex}
			\begin{tabular*}{\columnwidth}{@{\extracolsep{\fill}} p{0.31\columnwidth} p{0.67\columnwidth} @{}}
			\hline\hline
			Symbol & \rcol{Definition} \\
			\hline
			$\mathbf{g}_{\mathbf{k}}$ & \rcol{Antisymmetric SOC field, $\mathbf{g}_{-\mathbf{k}}=-\mathbf{g}_{\mathbf{k}}$} \\
			$\mathbf{b}_{\mathbf{k}}$ & \rcol{$\mu_B\mathbf H+\mathbf g_{\mathbf k}$ (Eq.~\eqref{eq:ndef})} \\
			$\hat{\mathbf n}_{\mathbf{k}}$ & \rcol{$\mathbf b_{\mathbf k}/|\mathbf b_{\mathbf k}|$ (Eq.~\eqref{eq:ndef})} \\
			$\psi(\mathbf{k}),\mathbf{d}(\mathbf{k})$ & \rcol{Singlet and triplet pairing amplitudes (Eq.~\eqref{eq:Delta-gen})} \\
			$\lambda_{\mathbf{k}}$ & \rcol{$|\mathbf g_{\mathbf k}|/\Delta$ for pure $s$-wave pairing} \\
			$h$ & \rcol{Source field coupled as $-h\hat O$} \\
			$\Pi_{\mu\nu}$ & \rcol{Locking tensor $\langle\hat n_\mu\hat n_\nu\rangle_{\rm FS}$, or $\Pi_{\mu\nu}^{(d)}$ in the zero-SOC unitary-triplet limit (Eqs.~\eqref{eq:Pidef} and~\eqref{eq:strong-triplet})} \\
			$\pi_i$ & \rcol{Eigenvalues of $\Pi$, with $\pi_i\ge0$ and $\sum_i\pi_i=1$} \\
			$W^{s_1s_2}_{ph/pp,O}(\mathbf k)$ & \rcol{Nonnegative particle-hole/particle-particle sum-rule weights (Eqs.~\eqref{eq:Wph-def}--\eqref{eq:Wpp-def})} \\
			$\chi_{N}$ & \rcol{Reference Pauli susceptibility $2\mu_{B}^{2}N(0)$} \\
			$K_N^{\rm spin}$ & \rcol{$A\chi_N$ for scalar hyperfine coupling (Eq.~\eqref{eq:kappa-def})} \\
			$F_s(\lambda)$ & \rcol{Pure-$s$-wave SOC interpolation kernel (Eq.~\eqref{eq:Fs})} \\
			$\lambda_{\rm eff}$ & \rcol{SOC-to-gap ratio from the averaged kernel (Eq.~\eqref{eq:lambda-eff})} \\
			$Y_\lambda(\mathbf k,T)$ & \rcol{Yosida function on helicity sheet $\lambda=\pm$} \\
			$\mathrm{Tr}_s,\mathrm{Tr}_{\rm BdG}$ & \rcol{Single-particle and BdG-space traces} \\
			$\langle\cdot\rangle_{\rm FS}$ & \rcol{Reference-Fermi-surface average normalized by $N(0)$} \\
			$\langle\cdot\rangle_\lambda$ & \rcol{Average over helicity sheet $\lambda=\pm$} \\
			\hline\hline
			\end{tabular*}
		\end{minipage}
		\end{center}
	
	\section{Finite-Field Susceptibility and Knight-Shift Geometry}
	\label{app:finite-field-response}

	In a nonzero magnetic field, the static differential susceptibility and the equilibrium polarization measured by the Knight shift are distinct response quantities. This appendix records the normal-state texture scales, defines both responses, states when the linear-response ellipsoid remains applicable, and specifies the observable to be used in a self-consistent finite-field comparison.

	\subsection{Normal-state texture and model scales}

		For the finite-field normal-state texture tensor defined in Eq.~\eqref{eq:Pi-H}, the relations $\xi_{-\mathbf k}=\xi_{\mathbf k}$ and $\mathbf g_{-\mathbf k}=-\mathbf g_{\mathbf k}$, together with the change of variable $\mathbf k\mapsto-\mathbf k$, give
	\[
		\Pi^{(b)}_{\mu\nu}(\mathbf H)
		=\Pi^{(b)}_{\mu\nu}(-\mathbf H).
	\]
	At zero field, $\Pi^{(b)}_{\mu\nu}(0)=\Pi^{(g)}_{\mu\nu}\equiv\langle\hat g_\mu\hat g_\nu\rangle_{\rm FS}$. Hence, for $\mu_BH\ll g_{\min}$, its smooth weak-field expansion begins at second order:
	\begin{equation}
		\Pi^{(b)}_{\mu\nu}(\mathbf H)
		=\Pi^{(g)}_{\mu\nu}
		+O\!\left(\mu_B^2H^2/g_{\min}^2\right).
		\label{eq:Pi-H-weak}
	\end{equation}
	This is only a normal-state geometric statement; it does not imply the same expansion for the superconducting susceptibility $\chi_{\mu\nu}(0,\mathbf H)$.

	A useful geometric landmark is the normal-state cancellation field. For the spherical model $\mathbf g_{\mathbf k}=g\mathbf k$ with $\mathbf H\parallel\hat z$,
	\begin{equation}
		\mu_BH^\ast=gk_F,
		\label{eq:Hstar}
	\end{equation}
	at which $\mathbf b_{\mathbf k}$ vanishes at one FS pole. The regime
	\begin{equation}
		|gk_F-\mu_BH|\lesssim\Delta
		\label{eq:strong-locking-criterion}
	\end{equation}
		therefore signals the breakdown of a helicity-based strong-locking expansion near that pole, but does not by itself imply a superconducting gap closing.

	Bogoliubov-Fermi-surface formation is a separate, model-dependent condition. In the spherical $s$-wave model, the minimum polar quasiparticle energy is
	\[
		E_{\min}=\Delta(T=0,g,H)-\mu_BH,
	\]
	so the onset field $H_{c2}^{Z}$ satisfies
	\begin{equation}
		\mu_BH_{c2}^{Z}=\Delta(T=0,g,H_{c2}^{Z}),
		\label{eq:BFS-onset}
	\end{equation}
	provided superconductivity remains stable. The response kink reported in Refs.~\cite{pang2025,pang2026} occurs at this onset, which need not coincide with $H^\ast$. Neither condition defines a universal finite-field locking identity or a general Zeeman-dominated oblate limit.

	\subsection{Differential and secant responses}

	Let $\bm{\eta}$ collect the superconducting order-parameter amplitudes, phases, and orientations retained in a mean-field description. The equilibrium grand potential and magnetization are
	\begin{equation}
		\Omega_\ast(T,\mathbf H)
		\equiv\Omega[T,\mathbf H,\bm{\eta}_\ast(T,\mathbf H)],
		\qquad
		M_\mu(T,\mathbf H)
		=-\frac{d\Omega_\ast}{dH_\mu},
		\label{eq:finiteH-M}
	\end{equation}
	where $\partial\Omega/\partial\eta_a=0$ at $\bm{\eta}=\bm{\eta}_\ast$. The thermodynamic differential susceptibility about the finite-field equilibrium state is
	\begin{equation}
		\chi^{\rm diff}_{\mu\nu}(T;\mathbf H)
		\equiv
		\frac{\partial M_\mu(T,\mathbf H)}{\partial H_\nu}
		=-\frac{d^2\Omega_\ast}{dH_\mu dH_\nu}.
		\label{eq:finiteH-chidiff}
	\end{equation}
		It is symmetric wherever the equilibrium state is smooth and single-valued.

	Assume first that $\mathbf M(T,\mathbf0)=0$. Integrating Eq.~\eqref{eq:finiteH-chidiff} along the straight path from $\mathbf0$ to $\mathbf H$ gives
	\begin{equation}
		\begin{aligned}
			M_\mu(T,\mathbf H)
			&=\bar\chi_{\mu\nu}(T;\mathbf H)H_\nu,\\
			\bar\chi_{\mu\nu}(T;\mathbf H)
			&\equiv\int_0^1d\lambda\,
			\chi^{\rm diff}_{\mu\nu}(T;\lambda\mathbf H).
		\end{aligned}
		\label{eq:finiteH-chisec}
	\end{equation}
	The path-averaged tensor $\bar\chi$ is the \emph{secant susceptibility} connecting the equilibrium magnetization to the applied field. If a selected state carries a spontaneous magnetization, the left-hand side must instead be replaced by $\mathbf M(T,\mathbf H)-\mathbf M(T,\mathbf0)$ and the zero-field hyperfine contribution treated separately.

		With conventional unit factors absorbed into the site-$\mathbf R$ hyperfine tensor $A^{(\mathbf R)}$, the longitudinal spin Knight shift for $\mathbf H=H\hat{\mathbf H}$ is
	\begin{equation}
		\begin{aligned}
			K_s^{(\mathbf R)}(\hat{\mathbf H},T,H)
			&=\frac{\hat{\mathbf H}\cdot
			A^{(\mathbf R)}\mathbf M(T,H\hat{\mathbf H})}{H}\\
			&=\hat{\mathbf H}\cdot A^{(\mathbf R)}
			\bar\chi(T;H\hat{\mathbf H})\cdot\hat{\mathbf H}.
		\end{aligned}
		\label{eq:finiteH-K}
	\end{equation}
	Thus a finite-field Knight shift measures the equilibrium secant response, while a Kubo correlator evaluated about a finite-field state gives a differential response. For a time-reversal-invariant state without a spontaneous moment,
	\begin{equation}
		\bar\chi(T;\mathbf H)
		=\chi^{\rm diff}(T;\mathbf0)+O(H^2),
		\label{eq:finiteH-linear}
	\end{equation}
	so Eqs.~\eqref{eq:Kdef}--\eqref{eq:Kscalar} follow in the linear-response limit.

	\subsection{Ellipsoid versus finite-field angular surface}

	The Knight-shift ellipsoid of Sec.~\ref{sec:ellipsoid} is defined by the symmetric linear-response tensor $\chi^{\rm diff}(T;\mathbf0)$. At a fixed background vector $\mathbf H_0$, the local differential tensor $\chi^{\rm diff}(T;\mathbf H_0)$ can likewise define a candidate-specific differential-response ellipsoid. This object describes the response to an additional infinitesimal probe about $\mathbf H_0$; it is not generally the equilibrium Knight shift at $\mathbf H_0$.

	For a rotation experiment at fixed field magnitude $H$, the directly measured object is the angular surface
	\begin{equation}
		\mathcal S_K(T,H)
		\equiv
		\left\{
		K_s(\hat{\mathbf H},T,H)\hat{\mathbf H}:
		\hat{\mathbf H}\in S^2
		\right\}.
		\label{eq:finiteH-surface}
	\end{equation}
	Each direction samples $\bar\chi(T;H\hat{\mathbf H})$ at a different background vector. The data admit a field-conditioned ellipsoid representation only if one symmetric tensor $K_H^{\rm eff}$, independent of $\hat{\mathbf H}$, satisfies
	\begin{equation}
		K_s(\hat{\mathbf H},T,H)
		=\hat{\mathbf H}\cdot K_H^{\rm eff}(T)\cdot\hat{\mathbf H}
		\label{eq:finiteH-effective-tensor}
	\end{equation}
	over the measured angular range. Otherwise the surface should not be assigned principal semi-axes or simplex coordinates. Measurements performed at different field magnitudes require either a demonstrated field-independent regime, interpolation to a common field, or a model-specific calculation at each actual field vector.

	\subsection{Self-consistent order-parameter response}

	Differentiating the stationarity condition for $\bm{\eta}_\ast$ gives the full thermodynamic susceptibility
	\begin{equation}
		\chi^{\rm diff}_{\mu\nu}
		=-\Omega_{H_\mu H_\nu}
		+\Omega_{H_\mu\eta_a}
		\left(\Omega_{\eta\eta}^{-1}\right)_{ab}
		\Omega_{\eta_bH_\nu},
		\label{eq:finiteH-relaxation}
	\end{equation}
	where all derivatives are evaluated at the self-consistent equilibrium state and the order-parameter Hessian is restricted to stable, nonredundant variables. A Kubo evaluation with $\bm{\eta}$ held fixed gives the first term. The second term accounts for relaxation or rotation of the gap and $\mathbf d$ vector under the infinitesimal probe; it vanishes only when the mixed derivatives vanish or the relevant order-parameter variables are externally pinned. In an exact treatment the corresponding collective vertex contributions are part of the thermodynamic response.

	Reference~\cite{pang2026} demonstrates how a finite Zeeman field can be included in the background BdG Hamiltonian and how the gap can be solved self-consistently before evaluating the Kubo response. For direct comparison with a finite-field Knight-shift experiment when the response is nonlinear, one should additionally calculate $\mathbf M(T,\mathbf H)$ from Eq.~\eqref{eq:finiteH-M}, or integrate the full differential susceptibility as in Eq.~\eqref{eq:finiteH-chisec}, and then use Eq.~\eqref{eq:finiteH-K}. Orbital and diamagnetic shifts must be removed separately. Protocol~C implements this distinction.

	\subsection{Conditional UBe$_{13}$ ellipsoid construction}
	\label{app:ube13-construction}

		The single-crystal $^{9}$Be NMR measurements of Ref.~\cite{Minami2026UBe13} illustrate both the reconstruction procedure and its finite-field limitation. In the zero-field linear-response limit, or under the additional approximation that all experimental fields probe one field-independent tensor, scalar hyperfine coupling gives
	\begin{equation}
		q(\hat{\mathbf H})
		\equiv
		1-\frac{K_s(\hat{\mathbf H},0)}
		{K_s(\hat{\mathbf H},T_c^+)}
		=\hat{\mathbf H}\cdot\Pi\cdot\hat{\mathbf H}+o(1).
		\label{eq:cubic-directional-map}
	\end{equation}
	For a general anisotropic hyperfine tensor $A$, the raw ratio is instead
	\begin{equation}
		q_A(\hat{\mathbf H})
		=\frac{\hat{\mathbf H}\cdot A\Pi\cdot\hat{\mathbf H}}
		{\hat{\mathbf H}\cdot A\cdot\hat{\mathbf H}}+o(1),
		\label{eq:cubic-directional-hf}
	\end{equation}
	and is not a Rayleigh quotient of $\Pi$. The diagonal coaxial case reduces axis by axis to the normalized electronic deficit.

		Table~II of Ref.~\cite{Minami2026UBe13} reports
	\begin{equation}
		\begin{aligned}
			q_{[001]}&=0.25\pm0.04,\\
			q_{[111]}&=0.33\pm0.05,\\
			q_{[110]}&=0.40\pm0.05.
		\end{aligned}
		\label{eq:ube13-deficits}
	\end{equation}
	If $\Pi$ is diagonal in the cubic frame, the $[111]$ direction samples the trace:
	\begin{equation}
		q_{[111]}=\tfrac13\operatorname{Tr}\Pi+o(1)
		=\tfrac13+o(1),
		\label{eq:cubic-111}
	\end{equation}
	and hence
	\begin{equation}
		\frac{\operatorname{Tr}\chi(0)}{\chi_N}
		=3(1-q_{[111]})+o(1)
		=2.01\pm0.15+o(1).
		\label{eq:ube13-trace}
	\end{equation}
	Without diagonality, Eq.~\eqref{eq:cubic-111-general} must be used.

	Under the single-domain axial ansatz of Ref.~\cite{Minami2026UBe13},
	\begin{equation}
		\Pi=\operatorname{diag}(\pi_\perp,\pi_\perp,\pi_\parallel),
		\qquad 2\pi_\perp+\pi_\parallel=1.
		\label{eq:ube13-axial}
	\end{equation}
	Then $q_{[001]}=\pi_\parallel+o(1)$ and $q_{[110]}=\pi_\perp+o(1)$, while $q_{[111]}=1/3+o(1)$. The trace check gives $2q_{[110]}+q_{[001]}=1.05\pm0.11$, and a weighted fit yields
	\begin{equation}
		(\pi_x,\pi_y,\pi_z)\simeq(0.38,0.38,0.24).
		\label{eq:ube13-simplex}
	\end{equation}
	The corresponding putative ellipsoid has semi-axes $(0.62,0.62,0.76)$ and is weakly prolate along $[001]$.

		This procedure does not reconstruct a zero-field tensor. The measurements use $2$~T for $[001]$ and $[111]$ but $1.75$~T for $[110]$, and Ref.~\cite{Minami2026UBe13} allows field-induced order-parameter mixing and domain selection. Without the fixed-tensor approximation, the three values sample the finite-field angular surface at different background vectors. A common-field angular scan, an $H\to0$ extrapolation, or a self-consistent calculation of $\mathbf M(T,\mathbf H)/H$ at each actual field vector is required to establish one ellipsoid.

	\section{Dynamical Response and the Momentum Limitation of $1/T_1$}
	\label{app:dynamical-response}
	\label{sec:spinFGT}
	
	The main text uses $1/T_1$ only as a momentum-summed transverse probe. This appendix gives the Kramers--Kronig relation, explains why the Bogoliubov sum rule is not a dynamical spectral-weight conservation law, and derives the restricted exact-$\mathbf q=0$ selection rule.
	
	\subsection{Kramers--Kronig subtraction}
	
	Let $\chi_{\mu\mu}^{R}(\mathbf q,\omega;T)$ denote the retarded susceptibility. The thermodynamic uniform susceptibility relevant to the Knight shift is
	\begin{equation}
		\chi_{\mu\mu}(0,T)
		=\frac{2}{\pi}\lim_{\mathbf q\to0}\int_0^\infty
		\frac{\operatorname{Im}\chi_{\mu\mu}^{R}(\mathbf q,\omega;T)}
		{\omega}\,d\omega .
		\label{eq:KK}
	\end{equation}
	Consequently,
	\begin{widetext}
		\begin{equation}
			\frac{2}{\pi}\lim_{\mathbf q\to0}\int_0^\infty
			\frac{d\omega}{\omega}
			\big[\operatorname{Im}\chi_{\mu\mu}^{R,N}(\mathbf q,\omega)
			-\operatorname{Im}\chi_{\mu\mu}^{R,SC}(\mathbf q,\omega)\big]
			=\chi_{\mu\mu}^{N}(0)-\chi_{\mu\mu}^{SC}(0).
			\label{eq:spinFGT}
		\end{equation}
	\end{widetext}
	The $\mathbf q\to0$ limit is essential: at exact $\mathbf q=0$, a collisionless intraband Pauli contribution can be absent from the finite-frequency spectrum. Under the zero-field strong-locking assumptions, the right-hand side is $\chi_N\Pi_{\mu\mu}+o(1)$. This is a signed subtraction, not a conservation law.
	
	\subsection{Relation to the Bogoliubov sum rule}
	
	At $T=0$, assuming no residual zero-energy quasiparticle contribution, the regular positive-frequency exact-$\mathbf q=0$ superconducting spectrum is
	\begin{equation}
		\begin{split}
			\operatorname{Im}\chi_{\mu\mu}^{R,SC}(\mathbf0,\omega>0)
			={}& \frac{\pi\mu_B^2}{2}
			\sum_{\mathbf k,s_1s_2}W_{pp,\mu}^{s_1s_2}(\mathbf k)\\
			&\times\delta(\omega-E_{\mathbf ks_1}-E_{-\mathbf ks_2}).
		\end{split}
		\label{eq:ImchiSC}
	\end{equation}
	The Bogoliubov sum rule fixes only
	\begin{equation}
		\sum_{s_1s_2}
		[W_{ph,\mu}^{s_1s_2}(\mathbf k)+W_{pp,\mu}^{s_1s_2}(\mathbf k)]
		=2,
		\label{eq:total-weight}
	\end{equation}
	not the transition energies or occupations. It therefore implies no conserved dynamical spectral weight or model-independent redistribution across $T_c$. Equation~\eqref{eq:spinFGT} fixes only the inverse-frequency moment of the $\mathbf q\to0$ spectral difference, not its sign or frequency distribution.
	\label{sec:spinFGT-rigorous}
	
	\subsection{Transverse response and momentum summation}
	\label{sec:T1-vanishing}
	
	For $\mathbf H\parallel\hat e_\alpha$, nuclear relaxation probes the transverse channels $\beta,\gamma$:
	\begin{equation}
		\begin{split}
			\left.\frac{1}{T_1T}\right|_{\mathbf H\parallel\hat e_\alpha}
			={}&\gamma_n^2k_B\sum_{\mathbf q}|A_{\rm hf}(\mathbf q)|^2\\
			&\times\lim_{\omega\to0}
			\frac{\operatorname{Im}\chi_{\beta\beta}(\mathbf q,\omega)
			+\operatorname{Im}\chi_{\gamma\gamma}(\mathbf q,\omega)}
			{\omega}.
		\end{split}
		\label{eq:T1}
	\end{equation}
	Thus ordinary $1/T_1$ cannot experimentally separate the exact $\mathbf q=0$ term from nonzero-$\mathbf q$ contributions. Even if the exact term is known theoretically to vanish, the remaining signal may be dominated by arbitrarily small nonzero momentum and need not indicate a response peaked at a finite wave vector.
	
		For bookkeeping, one may write
	\begin{equation}
		\frac{1}{T_1T}
		=\left.\frac{1}{T_1T}\right|_{\mathbf q=0}
		+\left.\frac{1}{T_1T}\right|_{\mathbf q\ne0}.
		\label{eq:T1-decomp}
	\end{equation}
	In the strong-SOC helicity regime, if the weak measurement field preserves the zero-field helicity selection rules and
	\begin{equation}
		\Pi^{(g)}_{\beta\beta}=\Pi^{(g)}_{\gamma\gamma}=0,
		\label{eq:T1-transverse-condition}
	\end{equation}
	then both transverse spin operators have no intraband helicity matrix elements. For nonzero $g_{\min}$ and a fully gapped superconducting state,
	\begin{equation}
		\boxed{\;
		\left.\left(\frac{1}{T_1T}\right)_{\mathbf q=0}\right|_{\mathbf H\parallel\hat e_\alpha}
		=0
		\;}
		\label{eq:T1-vanishing}
	\end{equation}
	in both the normal and superconducting states within these assumptions.
	
	The criterion requires both transverse projections to vanish, applies only after the tensor has been independently identified as $\Pi^{(g)}$, and additionally requires scalar hyperfine coupling or a calibrated coaxial tensor preserving the transverse channels. It does not follow from a zero-SOC triplet tensor $\Pi^{(d)}$ occupying the same simplex point. Outside these conditions, the static tensor gives no model-independent estimate of $1/T_1$.
	\label{sec:T1-noestimate}
	
	\subsection{Hebel--Slichter qualification}
	\label{sec:HS}
	
	The Hebel--Slichter peak is controlled by the quasiparticle density of states, coherence factors, impurity broadening, and the momentum structure of the hyperfine-weighted susceptibility~\cite{Leggett1975,AndersonandMorel1961}. Its presence can support conventional fully gapped coherence; its absence is not unique to unconventional pairing and establishes neither spectral transfer nor finite-$\mathbf q$ dominance.

	\section{Intermediate-SOC Derivation and Supporting Checks}
	\label{app:intermediate-support}

	\subsection{Reduction to a single integral}

	Specializing Eq.~(23c) of Ref.~\cite{pang2026} to $\mathbf H=0$ and $T=0$, writing the spin factor in tensor form, and taking the reference-Fermi-surface limit gives
	\begin{align}
		\chi_{\mu\nu}(0) &= \tfrac{\chi_{N}}{2}\big\langle[\delta_{\mu\nu}-\hat n_{\mu}\hat n_{\nu}]\tilde J(\lambda_{\mathbf{k}})\big\rangle_{\rm FS}, \label{eq:chi-FSaverage}\\
		\tilde J(\lambda) &= \int_{-\infty}^{\infty}\!d\xi\,\mathcal{F}(\xi,\lambda\Delta,\Delta), \label{eq:Jdef}
	\end{align}
	where $\lambda_{\mathbf k}=|\mathbf g_{\mathbf k}|/\Delta$. Choosing the global gauge $\Delta>0$, define $\alpha=\lambda\Delta$, $\xi_{\pm}=\xi\pm\alpha$, and $E_{\pm}=\sqrt{\xi_{\pm}^{2}+\Delta^{2}}$. The scalar kernel is
	\[
		\mathcal F(\xi,\alpha,\Delta)
		=\frac{E_{+}E_{-}-\xi_{+}\xi_{-}-\Delta^{2}}
		{E_{+}E_{-}(E_{+}+E_{-})}.
	\]

	\subsection{Closed-form evaluation}
	\label{sec:intermediate-closed}

	For $\lambda>0$, set
	\[
		\begin{aligned}
			u&=\xi/\Delta,&
			f&=\sqrt{(u+\lambda)^2+1},\\
			g&=\sqrt{(u-\lambda)^2+1},&
			a&=(f+g)/2,\\
			t&=u/a=(f-g)/(2\lambda).&&
		\end{aligned}
	\]
	Then $t\in(-1,1)$, $a^2=\lambda^2+(1-t^2)^{-1}$, and Eq.~\eqref{eq:Jdef} reduces to
	\[
		\tilde J(\lambda)
		=\int_{-1}^{1}\!dt\,
		\frac{\lambda^2(1-t^2)}
		{1+\lambda^2(1-t^2)}
		=2\left[1-\frac{\operatorname{asinh}\lambda}
		{\lambda\sqrt{1+\lambda^2}}\right].
	\]
	The value at $\lambda=0$ follows by continuity. Thus $F_s(\lambda)=\tilde J(\lambda)/2$ gives Eq.~\eqref{eq:Fs}, and substitution into Eq.~\eqref{eq:chi-FSaverage} gives Eq.~\eqref{eq:chi-explicit}.

	\subsection{Kernel properties and inversion bounds}

	The asymptotic expansions are
	\[
		\begin{aligned}
			F_s(\lambda)
			&=\tfrac23\lambda^2-\tfrac{8}{15}\lambda^4
			+O(\lambda^6), && \lambda\to0,\\
			F_s(\lambda)
			&=1-\frac{\ln(2\lambda)}{\lambda^2}
			+\frac{2\ln(2\lambda)-1}{4\lambda^4}\\
			&\quad+O\!\left(\frac{\ln\lambda}{\lambda^6}\right),
			&& \lambda\to\infty.
		\end{aligned}
	\]
	Representative values are $F_s(1)=0.376775$, $F_s(2)=0.677193$, and $F_s(5)=0.909299$. Half-saturation, $F_s(\lambda_{1/2})=1/2$, occurs at $\lambda_{1/2}=1.316811$.

	To quantify the nonuniform-SOC inversion in Eq.~\eqref{eq:lambda-eff}, define
	\[
		\bar\lambda=\langle\lambda_{\mathbf k}\rangle_{\rm FS},\qquad
		\operatorname{Var}(\lambda_{\mathbf k})
		=\langle(\lambda_{\mathbf k}-\bar\lambda)^2\rangle_{\rm FS}.
	\]
	For a narrow distribution about $\bar\lambda>0$,
	\begin{equation}
		\langle F_s(\lambda_{\mathbf k})\rangle_{\rm FS}
		=F_s(\bar\lambda)+\tfrac12F_s''(\bar\lambda)
		\operatorname{Var}(\lambda_{\mathbf k})+\cdots,
		\label{eq:Fs-variance}
	\end{equation}
	and therefore
	\[
		\lambda_{\rm eff}-\bar\lambda
		=\frac{F_s''(\bar\lambda)}{2F_s'(\bar\lambda)}
		\operatorname{Var}(\lambda_{\mathbf k})+\cdots.
	\]
		Direct evaluation gives one inflection point, $\lambda^\ast=0.668907\ldots$, with $F_s$ convex below and concave above it. Jensen's inequality yields $\lambda_{\rm eff}\ge\bar\lambda$ if the full distribution lies in $0\le\lambda_{\mathbf k}\le\lambda^\ast$, and $\lambda_{\rm eff}\le\bar\lambda$ if it lies in $\lambda_{\mathbf k}\ge\lambda^\ast$. No universal ordering follows when the distribution crosses the inflection point. Quantitative extraction of the mean SOC-to-gap ratio therefore requires independent information about the FS variation of $|\mathbf g_{\mathbf k}|$, for example from DFT+SOC, or dominance by a near-isotropic FS sheet.
		
	\subsection{Numerical cross-check against published $T=0$ endpoints}

	\begin{center}
	\begin{minipage}{\columnwidth}
		\centering
		\includegraphics[width=\columnwidth]{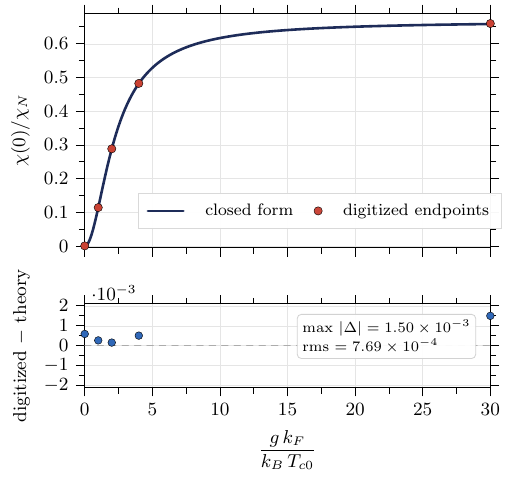}
		\captionof{figure}{Numerical cross-check of Eq.~\eqref{eq:Fs} against the digitized $T=0$ endpoints of the $s$-wave curves in Fig.~3 of Ref.~\cite{pang2026}. The solid line shows Eq.~\eqref{eq:fig3-crosscheck}, and the symbols are the digitized endpoints. The lower panel shows the digitized values minus theory.}
		\label{fig:fig3-t0-validation}
	\end{minipage}
	\end{center}
	
	The five $s$-wave curves in Fig.~3 of Ref.~\cite{pang2026} are labeled by
	$x\equiv gk_{F}/(k_{B}T_{c0})\in\{0,1,2,4,30\}$. Digitizing their $T=0$ intercepts and using the weak-coupling BCS ratio
	$\Delta_{0}/(k_{B}T_{c0})=\pi/e^{\gamma_{E}}=1.7638769\ldots$ gives
	$\lambda=xe^{\gamma_E}/\pi$ and hence
	\begin{equation}
		\frac{\chi(0)}{\chi_{N}}
		=\frac{2}{3}F_{s}\!\left(\frac{xe^{\gamma_E}}{\pi}\right).
		\label{eq:fig3-crosscheck}
	\end{equation}
	Figure~\ref{fig:fig3-t0-validation} compares this prediction with the digitized endpoints. The maximum absolute deviation is $1.5\times10^{-3}$ in $\chi/\chi_N$, and the root-mean-square deviation is $7.7\times10^{-4}$.

	\subsection{Heuristic extension to a unitary triplet with an anisotropic gap}
		
	At the level of the scalar $\xi$ integration, the replacement
	$\Delta\to|\mathbf d(\mathbf k)|$ suggests the heuristic kernel
	\[
		F_{\rm t}(\mathbf k)
		\approx F_s\!\left(\frac{|\mathbf g_{\mathbf k}|}{|\mathbf d(\mathbf k)|}\right)
	\]
	for $|\mathbf d(\mathbf k)|>0$, with limiting value $F_{\rm t}\to1$ as $|\mathbf d|\to0$ at fixed $|\mathbf g|$. This substitution retains only the gap-magnitude dependence inherited from the pure-$s$-wave integral. It omits coherence factors associated with the angular structure of $\hat{\mathbf d}$, intra-helicity triplet channels, and possible residual zero-energy contributions near nodes. It is therefore not a controlled response identity; a rigorous triplet derivation is beyond the present scope.

	\bibliographystyle{apsrev4-2}
	\bibliography{mag-soc}
	
	\end{document}